    \newtcbox{\feedback}{nobeforeafter,colframe=black,colback=white,boxrule=0.5pt,arc=2pt,
      boxsep=0pt,left=2pt,right=2pt,top=2pt,bottom=2pt,tcbox raise base}
    \newtheorem{asm}{Assumption}
    \newtheorem{prop}{Proposition}
    \newtheorem{lem}{Lemma}
    \newtheorem{cor}{Corollary}
    \newtheorem{defn}{Definition}
\newcolumntype{L}[1]{>{\raggedright\let\newline\\\arraybackslash}m{#1}}
\newcolumntype{C}[1]{>{\centering\let\newline\\\arraybackslash\hspace{0pt}}m{#1}}
\newcolumntype{R}[1]{>{\raggedleft\let\newline\\\arraybackslash\hspace{0pt}}m{#1}}
	\newcommand{\Cov}{\mathop{}\!\textnormal{Cov}}
	\newcommand{\bracks}[1]{\left[#1\right]}
	\newcommand{\expe}[1]{\mathbb{E}\bracks{#1}}
	\newcommand{\expestar}[1]{\mathbb{E}^*\bracks{#1}}
	\newcommand{\var}[1]{\mathbb{V}\text{ar}\bracks{#1}}
	\newcommand{\parens}[1]{\left(#1\right)}
    \newcommand{\prob}[1]{\mathbb{P}\parens{#1}}
	\newcommand{\reals}{\mathbb{R}}
	\newcommand{\betahat}{\ensuremath{\hat{\beta}}}
	\newcommand{\betatilde}{\ensuremath{\tilde{\beta}}}
\newcommand{\normnot}[2]{\mathcal{N}\parens{#1,\,#2}}
\newcommand{\twovec}[2]{\parens{\begin{array}{c} #1 \\ #2 \end{array}}}
\newcommand{\twobytwomat}[4]{\parens{\begin{array}{cc} #1 & #2 \\ #3 & #4 \end{array}}}
\newcommand{\betahatpost}{\betahat_{post}}
\newcommand{\betahatpostTN}{\betahat_{post}^{TN}}
\newcommand{\betahatpre}{\betahat_{pre}}
\newcommand{\betatildepost}{\betatilde_{post}}
\newcommand{\betapost}{\beta_{post}}
\newcommand{\gammapost}{\gamma_{post}}
\newcommand{\gammahatpost}{\hat{\gamma}_{post}}
\newcommand{\gammahatpostTN}{\hat{\gamma}_{post}^{TN}}
\newcommand{\CITNbeta}{CI^{TN}_{\beta}}
\newcommand{\CITNgamma}{CI^{TN}_{\gamma}}
\newcommand{\CITrad}{CI^{Trad}}
\newcommand{\betapre}{\beta_{pre}}
\newcommand{\ybar}{\bar{y}}
\newcommand{\xt}{x^t}
\title{Should We Adjust for the Test for Pre-trends in Difference-in-Difference Designs?}
\author{Jonathan Roth\thanks{Department of Economics, Harvard University, \href{mailto:jonathanroth@g.harvard.edu}{jonathanroth@g.harvard.edu}. I am grateful to Isaiah Andrews, Kirill Borusyak, Edward Glaeser, Nathan Hendren, Ariella Kahn-Lang, Max Kasy, Larry Katz, and Jann Spiess for helpful comments. This material is based upon work supported by the National Science Foundation Graduate Research Fellowship under Grant DGE1144152.}}
\begin{document}
\maketitle

\begin{abstract}
    The common practice in difference-in-difference (DiD) designs is to check for parallel trends prior to treatment assignment, yet typical estimation and inference does not account for the fact that this test has occurred. I analyze the properties of the traditional DiD estimator conditional on having passed (i.e. not rejected) the test for parallel pre-trends. When the DiD design is valid and the test for pre-trends confirms it, the typical DiD estimator is unbiased, but traditional standard errors are overly conservative. Additionally, there exists an alternative unbiased estimator that is more efficient than the traditional DiD estimator under parallel trends. However, when in population there is a non-zero pre-trend but we fail to reject the hypothesis of parallel pre-trends, the DiD estimator is generally biased relative to the population DiD coefficient. Moreover, if the trend is monotone, then under reasonable assumptions the bias from conditioning exacerbates the bias relative to the true treatment effect. I propose new estimation and inference procedures that account for the test for parallel trends, and compare their performance to that of the traditional estimator in a Monte Carlo simulation.   
\end{abstract}

\section{Introduction}

There is a standard playbook for difference-in-difference designs in applied economics. The researcher first identifies a treatment group and a candidate control group, where the treatment group received some treatment at period $t=1$ and the candidate control group did not receive the treatment but might plausibly have had parallel trends to the treatment group in the absence of the treatment. The researcher then tests whether the treatment and control group in fact had parallel trends prior to the treatment, and if the pre-trends appear to be (roughly) parallel, the researcher accepts the research design and interprets the difference-in-difference estimator as an estimate of the causal effect of treatment. 

While testing for pre-trends is nearly universal, typical estimation and inference for the treatment effects ignores the fact that this test has occurred. In this paper, I address the following question: do the usual properties we expect of our difference-in-difference estimator and its associated confidence interval hold conditional on having not rejected the test for parallel pre-trends? In particular, I discuss whether the point estimate is unbiased and most efficient among unbiased estimators, and whether the standard errors achieve their nominal coverage rate. I also propose new estimation and inference procedures that recover some of these properties in cases where they do not hold for the traditional difference-in-difference procedure.

I consider the performance of the classical difference-in-difference estimator (and alternatives) conditional on having passed the test for pre-trends in two different cases: first, when the parallel pre-trends assumption does in fact hold, and we have correctly confirmed it; and second, when the parallel pre-trends assumption is false but we have failed to reject it in our pre-test. 

When the parallel pre-trends assumption is satisfied in population, the traditional difference-in-difference estimator $\betahatpost$ remains unbiased conditional on the pre-test. However, in this case the traditional standard errors are overly conservative. Moreover, I show that there exists an unbiased estimator $\betatildepost$ that exploits the assumption of no pre-trends which is more efficient than the traditional estimator $\betahatpost$. 

When the parallel pre-trends assumption is false but we fail to reject it in our pre-test, $\betahatpost$ is generally biased relative to the population difference-in-difference coefficient, which itself will differ from the true causal effect of the treatment if the pre-trend would have continued into the post period. Moreover, if there is a monotone underlying trend, I show that the bias from conditioning exacerbates the existing bias relative to the true treatment effect under certain reasonable conditions on the covariance structure. To make this concrete, suppose that the true mean of $y$ is increasing by 1 unit per period relative to the control group in the pre-period, and this trend would have continued absent treatment. Unconditionally, we would expect the difference-in-difference estimator for the first period to be biased upwards by 1 relative to the true treatment effect. Conditional on not having rejected the parallel pre-trends assumption, this bias will be greater than 1. That is, the draws of the data in which we fail to reject the parallel trends assumption tend to produce particularly bad treatment effects estimates. Thus, the usual ``omitted variable bias'' formula does not hold once we condition on the pre-test. Unfortunately, the modified estimator $\betatildepost$ discussed above suffers from the same issue, and the bias can be even more severe.
 
I view these results as both ``good news'' and ``bad news'' for the applied researcher. In some cases, we have strong \textit{a priori} reason to believe that treatment assignment was random, and the test for pre-trends serves only as a sanity check. In these cases, it seems reasonable to use the modified estimator $\betatildepost$, which is more efficient when the parallel pre-trends assumption holds but has undesirable properties when it does not. In other cases, however, the researcher does not have such strong \textit{a priori} reasons to believe in parallel trends -- in many cases, the pre-test itself is one of the main justifications for the research design. The results here are problematic for this approach, because they show that this approach tends to fail to reject bad research designs precisely in the cases where the estimated treatment effects are farthest from the truth. This concern interacts with concerns about ``file drawer effects'' or ``publication bias,'' because even if the probability of accepting a given bad design is small, researchers may search over many bad designs until they find ones with apparently parallel trends. Researchers should therefore be wary of relying on the absence of an observable pre-trend as a justification in and of itself for the validity of a research design.

I also present two new estimators, along with corresponding confidence intervals, that adjust for the pre-test for parallel trends. The first procedure provides optimal median-unbiased estimates (and valid confidence intervals) for the population difference-in-difference coefficient, i.e. the probability limit of OLS without conditioning. It thus removes the additional bias imposed by conditioning on non-significant pre-trends when the parallel trends assumption fails, and hence allows us to make use of the usual intuition for how OLS difference-in-difference estimates are affected by non-parallel trends. Note, however, that this estimator fixes the bias from conditioning on the observed pre-trends but not the bias relative to the true treatment effect from accepting a bad research design -- if there is a pre-trend and it would have continued into the post period, the population difference-in-difference coefficient will differ from the true treatment effect. 

The second estimator modifies the first to obtain median-unbiased estimates of the OLS coefficient from a regression that allows the treatment group to be on a linear trend relative to the control group.\footnote{I show the estimator can also be modified to incorporate higher-order polynomial (e.g. quadratic) pre-trends.} If in population there is a linear pre-trend but we fail to reject the test for parallel trends, this estimator will remain unbiased for the true treatment effect under the assumption that the linear trend would have continued into the post period. If in population there is a pre-trend that is non-linear, the estimator will generally be biased relative to the treatment effect, but may serve as a good approximation to the extent that the extrapolation of the best linear fit to the pre-period serves as a close approximation to the counterfactual trend.

\textbf{Related Literature.} This paper contributes to a large literature in econometrics and statistics showing that, in a variety of contexts, problems can arise in both estimation and inference if researchers do not account for a pre-testing or model selection step. Previous research has investigated, for instance, the consequences of pre-testing for statistical significance to determine which covariates to include in a regression specification (\citet{giles_pre-test_1993} provide a review), automated variable selection via the LASSO \citep{lee_exact_2016}, and selection between many models using goodness-of-fit criteria such as AIC or BIC (see \citet{leeb_model_2005} and references therein). I note that similar issues arise in difference-in-differences, and show that under reasonable assumptions, when the parallel trends assumption is violated the bias induced by the pre-test exacerbates the bias of the difference-in-difference estimator relative to the true treatment effect.      

I provide two types of corrections to account for the fact that researchers test for observable pre-trends in difference-in-difference designs. The first exploits the assumption of parallel trends to obtain an unbiased treatment effects estimate $\betatildepost$ that is more efficient than the traditional estimator when the parallel trends assumption holds (both unconditionally, and conditional on observing a roughly parallel trend). The observation that the traditional difference-in-difference estimator is inefficient under parallel pre-trends mirrors a similar observation made by \citet{borusyak_revisiting_2016} in the context of event study designs. I show, however, that this type of estimator can perform quite poorly in the case where the parallel trends assumption does not hold but we fail to reject the hypothesis of parallel trends in our pre-test. The second type of correction provides optimal median-unbiased estimates and valid confidence intervals for the population difference-in-difference coefficient, or an adjustment thereof, even when the parallel pre-trends assumption is violated. These procedures build on results by \citet{lee_exact_2016} and \citet{andrews_identification_2017}, who respectively provide corrected inference procedures for the LASSO and for treatment effects estimates under publication bias, as well as on earlier work by \citet{pfanzagl_parametric_1994}, who developed optimal quantile-unbiased estimation results for exponential-family distributions.

By highlighting the need to account for the test for observably parallel pre-trends and extending the difference-in-difference framework to correct for this test, this paper also contributes to work on the econometrics of difference-in-difference designs in particular. Previous work has illustrated the need to adjust standard errors for serial correlation and other types of clustering in the data (see, e.g., \citet{moulton_illustration_1990, bertrand_how_2004, donald_inference_2007, petersen_estimating_2009}). Other research has, for instance, extended the traditional difference-in-difference framework to allow the treatment and control groups to have different benefits from treatment \citep{athey_identification_2006}, or to relax the typical identification assumption of parallel trends to allow for differences driven by observable characteristics \citep{abadie_semiparametric_2005}.

Finally, this paper relates to the literature on selective publication of scientific results (\citet{rothstein_publication_2005} and \citet{christensen_transparency_2016} provide reviews). Interest in this topic has been stoked recently by replication failures \citep{open2015estimating, Camereraaf0918}, as well as empirical evidence suggesting manipulation of p-values and substantial publication bias \citep{brodeur_star_2016, andrews_identification_2017}. The existing literature has primarily assumed (either implicitly or explicitly) that certain treatment effects estimates are inherently more publishable than others -- e.g. journals may prefer large or statistically significant treatment effects. My results suggest, however, that even if journal editors have no preferences over the estimated treatment effects \textit{per se}, the distribution of published treatment effects can still be affected if editors have preferences over the observed pre-trends coefficients. This paper therefore has important implications for the design of alternative publication processes to mitigate publication bias. For instance, \citet{nyhan2015increasing} proposes a results-blind peer review process, in which referees are provided information on the research design but not the resulting treatment effects. The results here suggest, however, that such a scheme may still create a distorted distribution of published estimates if referees are provided evidence on the pre-trends coefficients as evidence in favor of the research design.

%Relate to ``results blind peer review'' https://www.dartmouth.edu/~nyhan/journal-reforms.pdf

%Kirill and Xavier

%Fears of biases have
% been fueled more recently by replication failures (Open Science Collaboration, 2015), anomalies in
% published p-values (Brodeur et al., 2016), and empirical evidence for publication biases (Andrews
% and Kasy, 2017). This concern is also evident in the American Economic Association’s 2012 decision
% to establish a registry for randomized controlled trials.

%Do we want to discuss interaction with publication bias here? 

\section{Intuition}

%The intuition for all of the results described in the introduction stems from the fact that

The intuition for why the distribution of estimated treatment effects is affected by the test for pre-trends stems from the fact that estimation error in the reference period ($t=0$) enters both the estimated pre-trends coefficients and the estimated treatment coefficients. To see this formally, let $\Delta \bar{Y}_t$ denote the difference in sample means between the treatment and control groups in period $t$, and let $\Delta \mu_t$ be the true difference in population means. The difference-in-difference estimate for the treatment effect in period 1 is $\betahatpost = \Delta \bar{Y}_1 - \Delta \bar{Y}_0 $, whereas the coefficient for the $k$th pre-period is $\betahat_{-k} = \Delta \bar{Y}_{-k} - \Delta \bar{Y}_0$ -- both of which contain a $\Delta \bar{Y}_0$ term. Although not necessary for most of our results, for simplicity consider the case where the estimation error $\Delta Y_t - \Delta \mu_t$ is independent across $t$, as would occur if the data come from repeated cross-sections. It is then immediate that $\betahatpost$ and $\betahat_{-k}$ will both tend to be above (below) their true expectation when $\Delta \bar{Y}_0 - \Delta \mu_0$ is less (greater) than zero. 

Conditioning on $\betahat_{-k}$ being insignificant thus has implications for the distribution of $\Delta \bar{Y}_0 - \Delta \mu_0$, and hence for $\betahat_{post}$. These dynamics are captured in Figure \ref{fig:distribution of measurement error}, which shows the simulated distribution of $\Delta \bar{Y}_0 - \Delta \mu_0$ and $\betahatpost - \betapost$ by whether or not we accept the pre-trends (meaning here that all of the $\betahat_{-k}$ are insignificant, in this case for $K=4$).\footnote{ Details on the simulation specification will be provided in Section \ref{sec:Simulations}.} In the top left panel, we see that when the true model is parallel, $\Delta \bar{Y}_{0} - \Delta \mu_0$ has lower variance once we condition on accepting the pre-trends, but continues to have mean 0. This is because under parallel trends, $\beta_{-k} = 0$, so $\betahat_{-k}$ tends to be insignificant for all $k$ when the measurement error $\Delta \bar{Y}_{0} - \Delta \mu_0$ is small in magnitude -- but the conditioning is symmetric around 0, so there is no bias. As a result, $\betahatpost - \betapost$ is centered around zero but has lower variance when we condition on accepting the pre-trends, as shown in the top right panel. By contrast, the bottom left panel shows that when there is truly an upward trend, then conditional on accepting the pre-trends, $\Delta \bar{Y}_{0} - \Delta \mu_0$ has mean below 0. This is because $\beta_{-k}<0$ for all $k$, so $\betahat_{-k}$ tends to be close to 0 when $\Delta \bar{Y}_{0} - \Delta \mu_0 < 0$. It follows that $\betahat_{post}$ tends to be greater than $\beta_{post}$ when there is a positive underlying trend but we fail to reject the pre-test for parallel trends, as shown in the bottom right panel.

\begin{figure}[!htp]
    \centering
    
    \subfloat{\includegraphics[width = 0.5\linewidth]{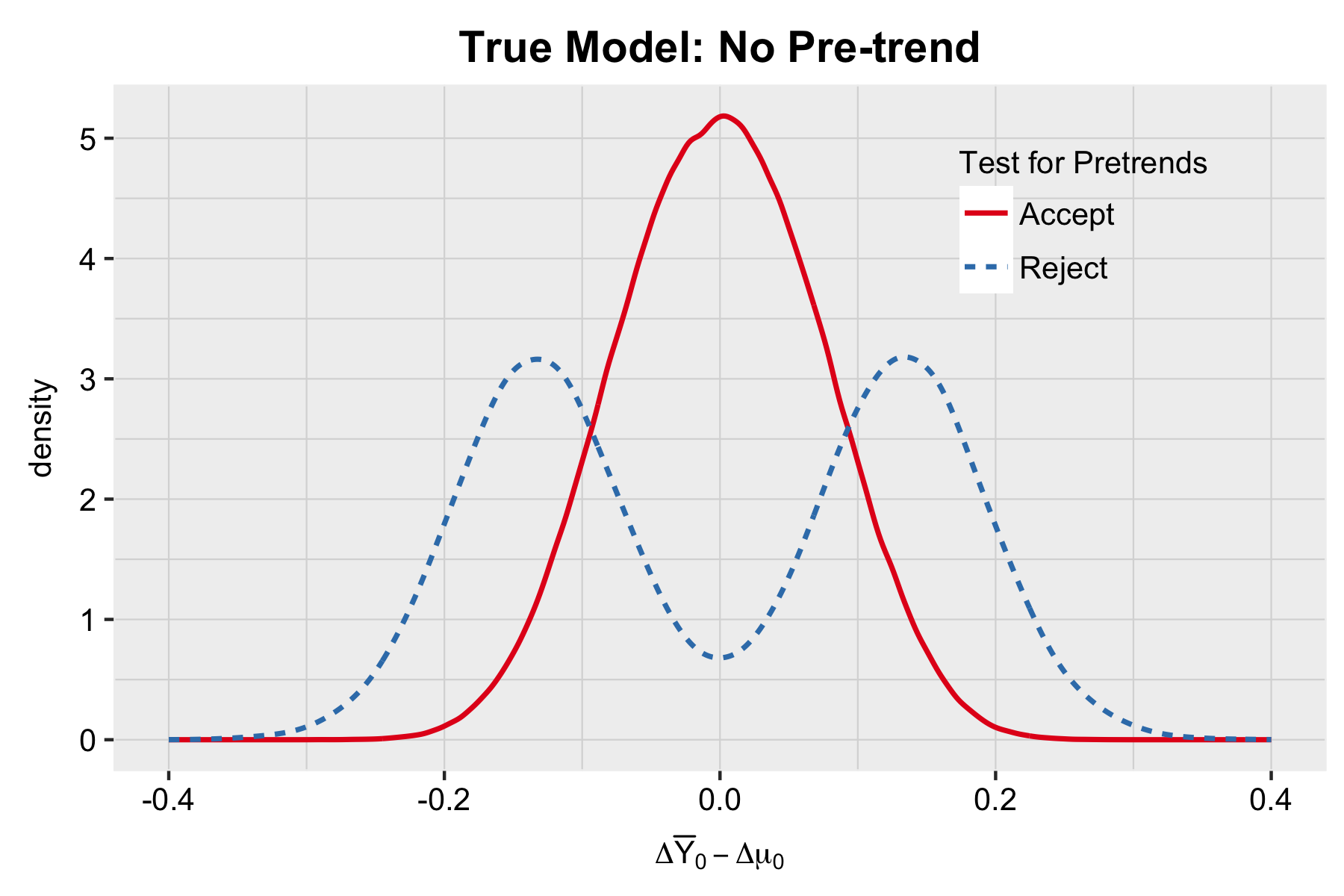}}
    \subfloat{\includegraphics[width = 0.5\linewidth]{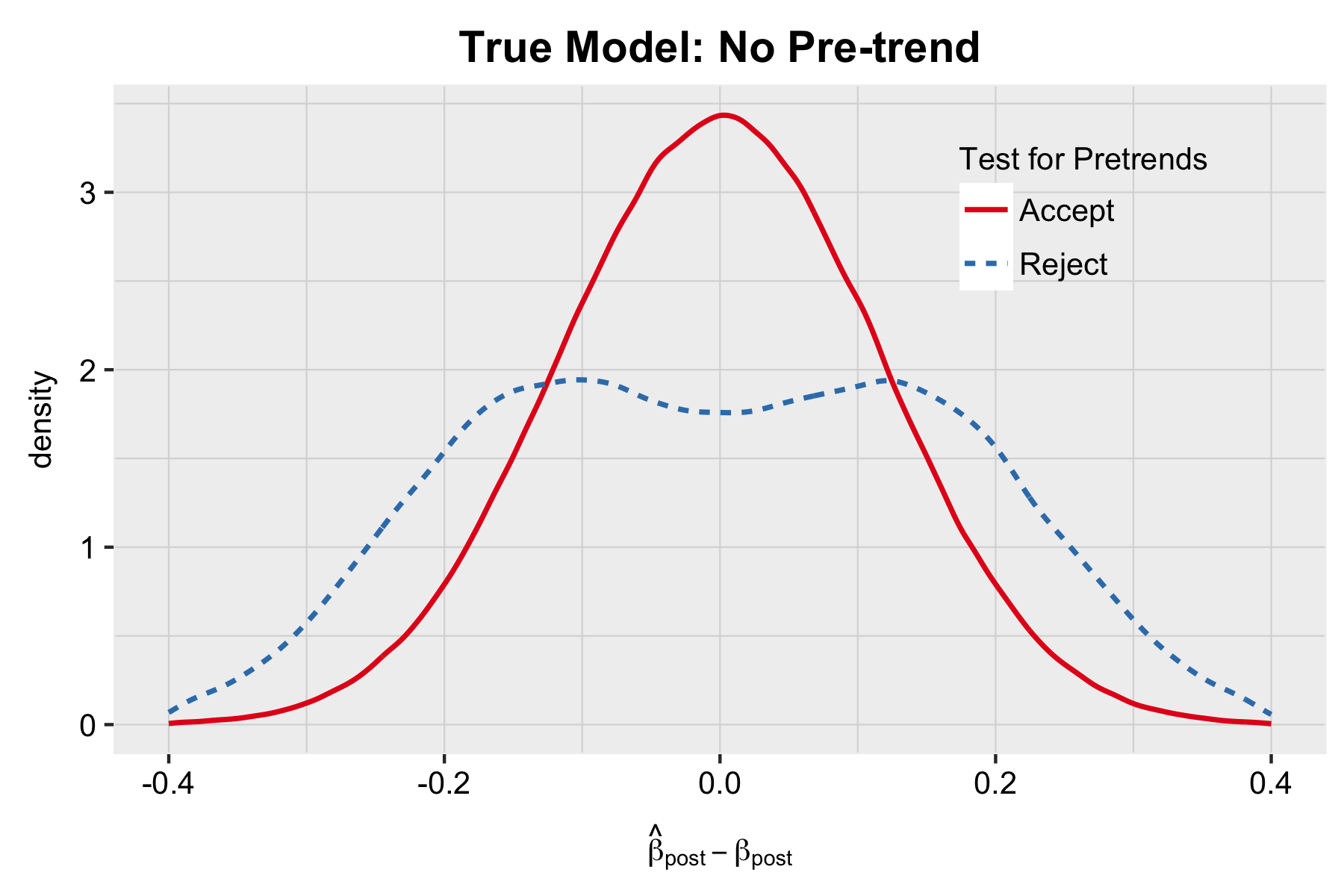}}\\
    \subfloat{\includegraphics[width = 0.5\linewidth]{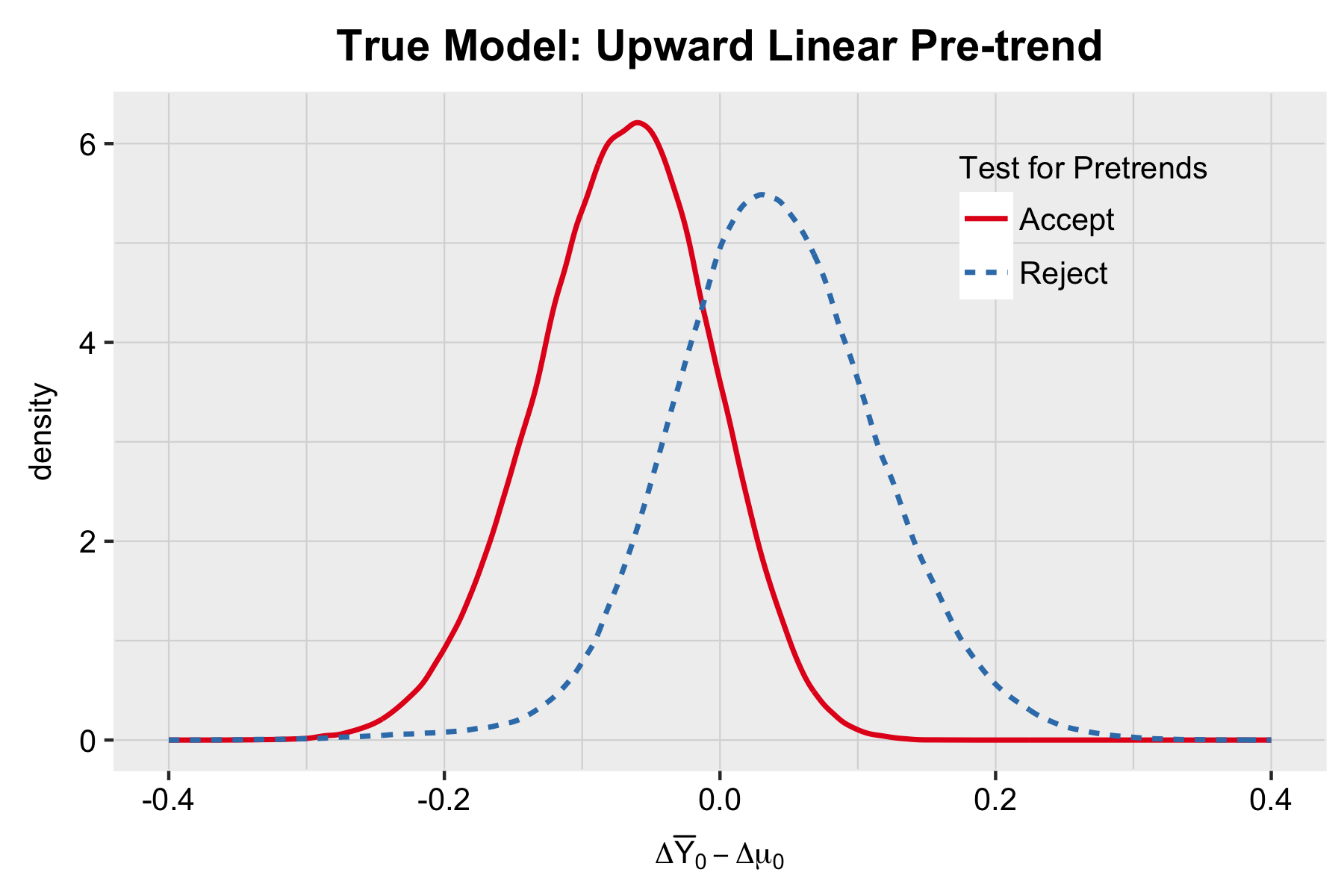}}
    \subfloat{\includegraphics[width = 0.5\linewidth]{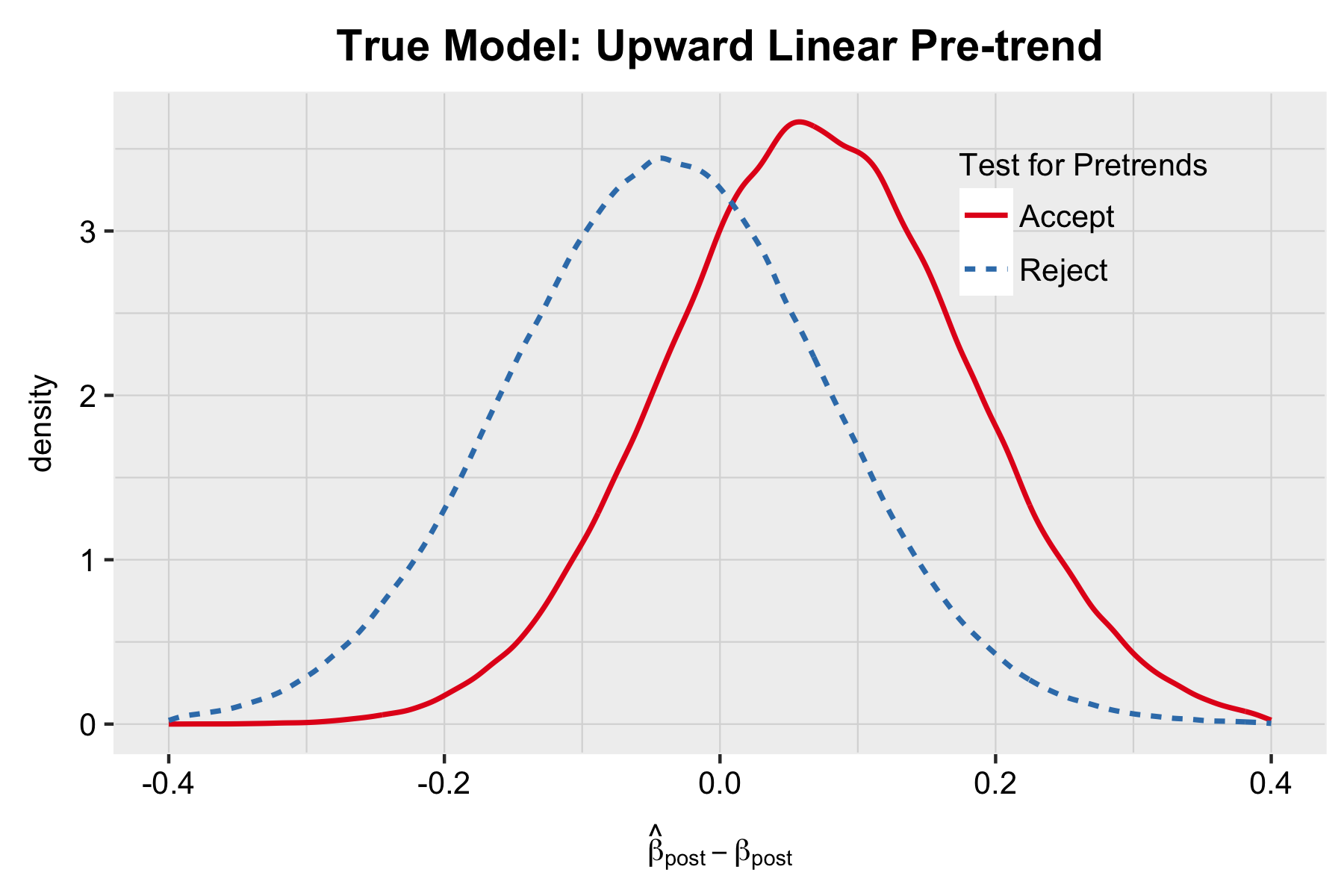}}
    \caption{Distribution of $\Delta \bar{Y}_0 - \Delta \mu_0$ and $\betahatpost - \betapost$ by whether or not we accept the test for parallel trends}
    \label{fig:distribution of measurement error}
\end{figure}

\section{Theoretical Results}

\subsection{Preliminary Set-Up}

We assume that the researcher observes data on an outcome $y$ for individuals in a treatment group and a control group, where the treatment group receives the treatment of interest beginning at period $t=1$. We suppose that the researcher observes $K+1$ periods prior to the treatment. For ease of exposition, we will assume that the researcher only observes a single period once the treatment occurs (i.e. $t=1$), although the results extend easily to estimating treatment effects in multiple subsequent periods. We suppose that the researcher estimates the following canonical regression: 

\begin{align} \label{eq: regression specification}
y_{it} = \alpha_t + \sum_{s \neq 0} \beta_s \times Treatment_i \times 1[t =s]  + \beta_{main} \, Treatment_i + \epsilon_{it}.
\end{align}

Let $\betahat_{pre} = (\betahat_{-1},\ldots,\betahat_{-K})$ denote the vector of estimated coefficients for the pre-periods, and $\betahatpost = \betahat_1$ denote the estimated coefficient for the post period. The researcher typically evaluates the pre-trends by examining whether $\betahatpre$ is close to zero by some metric -- e.g. whether all of its elements are statistically insignificant. If $\betahatpre$ passes this test, the researcher then evaluates $\betahatpost$ as an estimate of the causal effect of the treatment on $y$. We are interested in the distribution of $\betahatpost$ conditional on $\betahatpre$ having passed the test for parallel pre-trends.

Note that under very mild regularity conditions, the Central Limit Theorem will imply that (unconditionally):

$$\sqrt{N} \left( \twovec{\betahat_{post}}{\betahat_{pre}} - \twovec{\beta_{post}}{\beta_{pre}} \right) \to_{d} \normnot{ \twovec{0}{0} }{ \twobytwomat{ \tilde{\Sigma}_{11} }{ \tilde{\Sigma}_{12} }{ \tilde{\Sigma}_{21} }{ \tilde{\Sigma}_{22} } }.$$

For the purposes of this note, I assume that this normal approximation holds exactly, with known variance. That is, for fixed $N$,

$$\twovec{\betahat_{post}}{\betahat_{pre}} \sim \normnot{ \twovec{\beta_{post}}{\beta_{pre}} }{ \twobytwomat{ \Sigma_{11} }{ \Sigma_{12} }{ \Sigma_{21} }{ \Sigma_{22} } }$$

\noindent with $\Sigma = \twobytwomat{ \Sigma_{11} }{ \Sigma_{12} }{ \Sigma_{21} }{ \Sigma_{22} }$ known.\footnote{Note that $\Sigma$ here represents $\tilde{\Sigma} / N$ from the CLT above. That is, the square root of the diagonal of $\Sigma$ represents the standard errors.} %In future work, I hope to establish whether similar theoretical results go through asymptotically with $\Sigma$ replaced by an estimate $\hat{\Sigma}$. The simulation results use $\hat{\Sigma}$ and obtain results in line with the theoretical predictions for $\Sigma$ known.  

%The coefficients of interest are typically the $\betahat_s$, which show the difference in outcomes between treatment and control in period $s$ relative to the difference in period 0 (which is normalized to 0). 

\subsection{Results when the parallel pre-trends assumption is correct and we properly fail to reject it}

I first examine the performance of $\betahatpost$ conditional on having passed the test for pre-trends when the pre-period trends are in fact parallel in population.

\begin{asm}[Parallel Pre-trends]
    The parallel pre-trends assumption states that $\beta_{pre} = 0$.
\end{asm} 

We are interested in the distribution of $\betahat_{post}$ conditional on the researcher having accepted that the pre-period coefficients, $\betahat_{pre}$, are sufficiently close to zero. The standard used for determining whether pre-trends look ``okay'' in applied work is not entirely clear -- a commonly used criterion is that all the pre-period coefficients be (individually) statistically insignificant, although in other cases the researcher may follow other criteria (e.g. whether they can reject the joint test that all the pre-period coefficients are equal to 0). We will denote by $B_{NS} = \{ \hat{\beta}_{pre} : | \hat{\beta}_{pre,j} | / \sqrt{\Sigma_{jj}} \leq c_{\alpha} \text{ for all } j \}$ the set of values for $\betahat_{pre}$ such that no individual coefficient is statistically significant at the $\alpha$ level ($c_{\alpha} = 1.96$ for $\alpha = 0.05$), and we will denote by $B$ any arbitrary subset of the support of $\hat{\beta}_{pre}$. Some results will assume only that we've conditioned on $\hat{\beta}_{pre} \in B$ for some arbitrary $B$, whereas others will focus specifically on the case where we've conditioned on $\hat{\beta}_{pre} \in B_{NS}$.

\begin{prop}\label{prop: performance under parallel pre-trends}
    Suppose that the parallel pre-trends assumption holds. Let $\tilde{\beta}_{post} = \betahat_{post} - \Sigma_{12} \Sigma_{22}^{-1} \betahat_{pre} $. Then:
    
    \begin{enumerate}
        \item \label{subprop: betahat unbiased}
        %$\expe{\betahat_{post} | \betahat_{pre} \in B} = \beta_{post}$, for any set $B$ such that $\expe{\betahat_{pre} | \betahat_{pre} \in B } = 0$. In particular, 
        $\expe{\betahat_{post} | \betahat_{pre} \in B_{NS}} = \beta_{post}.$
        
        \item \label{subprop: var betahat reduced}
        $ \var{\betahat_{post} | \betahat_{pre}  \in B_{NS} } \leq \Sigma_{11}  = \var{\betahat_{post}}. $
        
        \item \label{subprop: betatilde unbiased}
        $\expe{\tilde{\beta}_{post} | \betahat_{pre} \in B} = \beta_{post}$ for any $B$.
        
        \item \label{subprop: betatilde more efficient}
        $ \var{\betatilde_{post} | \betahat_{pre}  \in B } = \Sigma_{11} - \Sigma_{12} \Sigma_{22}^{-1} \Sigma_{21} \leq \var{\betahat_{post} | \betahat_{pre}  \in B } $ for any $B$, and the result holds with strict inequality if $\var{ \Sigma_{12} \Sigma_{22}^{-1} \betahatpre \,|\, \betahatpre \in B}  > 0 $.
    \end{enumerate}
    
\end{prop}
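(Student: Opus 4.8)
\emph{Proof plan.} The plan is to reduce everything to one Gaussian projection. Set $\tilde{\beta}_{post} = \betahat_{post} - \Sigma_{12}\Sigma_{22}^{-1}\betahat_{pre}$, so that
\[
\betahat_{post} = \tilde{\beta}_{post} + \Sigma_{12}\Sigma_{22}^{-1}\betahat_{pre}
\]
and $\tilde{\beta}_{post}$ is $\betahat_{post}$ with its best linear predictor from $\betahat_{pre}$ subtracted off. First I would note $\Cov(\tilde{\beta}_{post},\betahat_{pre}) = \Sigma_{12} - \Sigma_{12}\Sigma_{22}^{-1}\Sigma_{22} = 0$, which by joint normality upgrades to independence $\tilde{\beta}_{post}\perp\betahat_{pre}$; this independence survives conditioning on any event in $\sigma(\betahat_{pre})$, so the conditional law of $\tilde{\beta}_{post}$ given $\{\betahat_{pre}\in B\}$ equals its unconditional law for every $B$ of positive probability. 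I will also use $\expe{\tilde{\beta}_{post}} = \beta_{post} - \Sigma_{12}\Sigma_{22}^{-1}\beta_{pre}$ and $\var{\tilde{\beta}_{post}} = \Sigma_{11} - \Sigma_{12}\Sigma_{22}^{-1}\Sigma_{21}$.

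Given this, parts \ref{subprop: betatilde unbiased} and \ref{subprop: betatilde more efficient} are essentially immediate. Under Parallel Pre-trends $\beta_{pre}=0$, so $\expe{\tilde{\beta}_{post}\mid\betahat_{pre}\in B} = \expe{\tilde{\beta}_{post}} = \beta_{post}$, which is \ref{subprop: betatilde unbiased}. For \ref{subprop: betatilde more efficient} I would condition the decomposition on $\{\betahat_{pre}\in B\}$; since $\tilde{\beta}_{post}$ stays independent of $\betahat_{pre}$, the cross term drops and
\[
\var{\betahat_{post}\mid\betahat_{pre}\in B} = \var{\tilde{\beta}_{post}} + \var{\Sigma_{12}\Sigma_{22}^{-1}\betahat_{pre}\mid\betahat_{pre}\in B},
\]
while $\var{\tilde{\beta}_{post}\mid\betahat_{pre}\in B} = \var{\tilde{\beta}_{post}} = \Sigma_{11}-\Sigma_{12}\Sigma_{22}^{-1}\Sigma_{21}$; this gives the stated inequality and identifies the strict case as $\var{\Sigma_{12}\Sigma_{22}^{-1}\betahat_{pre}\mid\betahat_{pre}\in B}>0$. (Note this part never uses $\beta_{pre}=0$, only the covariance structure.)

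For \ref{subprop: betahat unbiased} I would take conditional expectations in the decomposition with $B=B_{NS}$: the $\tilde{\beta}_{post}$ term contributes $\beta_{post}$, and what is left is to show $\expe{\betahat_{pre}\mid\betahat_{pre}\in B_{NS}}=0$. This is a symmetry argument: under $\beta_{pre}=0$ the density of $\betahat_{pre}\sim\normnot{0}{\Sigma_{22}}$ is symmetric about the origin, and $B_{NS}=\bigcap_j\{|\betahat_{pre,j}|\le c_{\alpha}\sqrt{\Sigma_{jj}}\}$ is a set symmetric about the origin, so the conditional law of $\betahat_{pre}$ is symmetric about $0$ and hence mean $0$. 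For \ref{subprop: var betahat reduced} I would reuse the variance identity above at $B=B_{NS}$ together with $\Sigma_{11}=\var{\betahat_{post}}=\var{\tilde{\beta}_{post}}+\var{\Sigma_{12}\Sigma_{22}^{-1}\betahat_{pre}}$, which reduces the claim to
\[
\var{\Sigma_{12}\Sigma_{22}^{-1}\betahat_{pre}\mid\betahat_{pre}\in B_{NS}} \le \var{\Sigma_{12}\Sigma_{22}^{-1}\betahat_{pre}}.
\]

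The hard part is this last inequality: restricting the centered Gaussian $\betahat_{pre}$ to the symmetric set $B_{NS}$ must not inflate the variance of the scalar linear functional $a'\betahat_{pre}$, where $a'=\Sigma_{12}\Sigma_{22}^{-1}$. The key observation I would lean on is that $B_{NS}$, being an intersection of symmetric slabs, is a symmetric \emph{convex} set, so the truncated density $\propto\exp(-\tfrac{1}{2}w'\Sigma_{22}^{-1}w)\,\Ind\{w\in B_{NS}\}$ is log-concave with Hessian $\Sigma_{22}^{-1}$ on its support; the Brascamp--Lieb variance inequality then bounds $\var{a'\betahat_{pre}\mid\betahat_{pre}\in B_{NS}}\le a'\Sigma_{22}a=\var{a'\betahat_{pre}}$, which is exactly the required bound. (Alternatively, the Khatri--\v{S}id\'ak correlation inequality applied to $B_{NS}$ and the slab $\{|a'\betahat_{pre}|\le t\}$ shows $|a'\betahat_{pre}|$ is stochastically smaller under the conditioning, which suffices since both laws are symmetric about $0$.) Everything else is bookkeeping with the projection identities from the first paragraph.
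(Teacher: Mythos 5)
Your proposal is correct, and its skeleton — the decomposition $\betahatpost = \betatildepost + \Sigma_{12}\Sigma_{22}^{-1}\betahatpre$, the zero-covariance-plus-joint-normality independence of $\betatildepost$ and $\betahatpre$, the resulting conditional variance identity, and the symmetry-of-$B_{NS}$ argument for Part \ref{subprop: betahat unbiased} — coincides with the paper's. The one genuinely different step is the justification of the variance reduction in Part \ref{subprop: var betahat reduced}. You reduce it to $\var{a'\betahatpre \,|\, \betahatpre \in B_{NS}} \leq a'\Sigma_{22}a$ for $a' = \Sigma_{12}\Sigma_{22}^{-1}$ and invoke the Brascamp--Lieb variance inequality for the Gaussian restricted to a convex set (or, alternatively, Khatri--\v{S}id\'ak plus symmetry to get stochastic dominance of $|a'\betahatpre|$). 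The paper instead proves a matrix-level lemma: for any convex $B$, the Jacobian of $\mu \mapsto \expe{Y \,|\, Y \in B, \mu}$ equals $\var{Y \,|\, Y \in B}\Sigma^{-1}$, and log-concavity of $\mu \mapsto \int_B \phi_\Sigma(y-\mu)\,dy$ (Prekopa) then yields that $\var{Y \,|\, Y \in B} - \Sigma$ is negative semi-definite. Both routes ultimately rest on log-concavity, but the paper's lemma buys more: the Jacobian identity is reused later in the proof of the monotone-pre-trend result (Proposition \ref{prop: relative magnitudes under lienar pre-trends}), whereas your scalar Brascamp--Lieb bound is tailored to the single linear functional needed here. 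Your Khatri--\v{S}id\'ak alternative is arguably the most elementary of the three but is specific to $B_{NS}$ being an intersection of symmetric slabs. One small point of rigor: the truncated density's potential is $+\infty$ off $B_{NS}$, so applying Brascamp--Lieb with ``Hessian $\Sigma_{22}^{-1}$'' strictly requires approximating the convex indicator by smooth convex functions and passing to the limit; this is routine but worth a sentence.
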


Result \ref{subprop: betahat unbiased} says that under the parallel pre-trends assumption, $\betahatpost$ remains unbiased if we condition on none of the pre-trend coefficients being significant. This result extends to conditioning on $\betahatpre \in B$ for any $B$ such that $\betahatpre$ has mean zero over the conditioning set. Result \ref{subprop: var betahat reduced} says that under the parallel pre-trends assumption, the variance of $\betahatpost$ conditional on not having significant pre-trend coefficients is smaller than the unconditional variance, which suggests that traditional standard errors may be conservative.\footnote{Note that this does not account for possible errors in the estimation of standard errors owing to the conditioning. The simulations below, which use estimates of $\Sigma$, suggest that this is likely second-order.} Result \ref{subprop: betatilde unbiased} says that under the parallel pre-trends assumption, the alternative estimator $\betatildepost$ is unbiased regardless of the conditioning set. Finally, Result \ref{subprop: betatilde more efficient} says that under the parallel pre-trends assumption, $\betatilde_{post}$ is more efficient than $\betahat_{post}$. Note that Results \ref{subprop: betatilde unbiased} and \ref{subprop: betatilde more efficient} apply for arbitrary conditoning sets, and thus, when the parallel pre-trends assumptions holds, $\betatildepost$ is both unbiased and more efficient than $\betahatpost$ even when we do not condition on the observed pre-trends.

%Intuition here

\subsection{Results when the parallel pre-trends assumption is false but we fail to reject it}

I now consider the case where the parallel pre-trends assumption may not hold, i.e. when $\betapre$ is potentially not equal to 0.

\begin{prop}\label{prop: bias when parallel trends is false - general}
For any $\betapre$, with $\betapre$ potentially not equal to 0,    
    \begin{enumerate}
        \item \label{subprop: betahat biased when pretrends false} 
        $\expe{\betahatpost \,|\, \betahatpre \in B} = \betapost + \Sigma_{12} \Sigma_{22}^{-1} \left( \expe{\betahatpre \,|\, \betahatpre \in B} - \betapre \right) $ for any $B$.
        
        \item \label{subprop: betatilde biased when pretrends false}
        $\expe{\betatildepost \,|\, \betahatpre \in B} = \betapost - \Sigma_{12} \Sigma_{22}^{-1}  \betapre $ for any $B$.
        
        \item \label{subprop: betahat has lowe variance when condition on BNS - general}
        $ \var{\betahat_{post} | \betahat_{pre}  \in B_{NS} } \leq \Sigma_{11}  = \var{\betahat_{post}} $.

    \end{enumerate}
\end{prop}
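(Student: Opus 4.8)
The plan is to reduce all three parts to the single classical fact that, for jointly Gaussian $(\betahatpost,\betahatpre)$, the residual of $\betahatpost$ from its linear projection onto $\betahatpre$ is independent of $\betahatpre$. Concretely, I would write
$$\betahatpost = \betapost + \Sigma_{12}\Sigma_{22}^{-1}(\betahatpre - \betapre) + u, \qquad u \sim \normnot{0}{\Sigma_{11} - \Sigma_{12}\Sigma_{22}^{-1}\Sigma_{21}},$$
with $u$ independent of $\betahatpre$. This is the standard Gaussian conditioning decomposition and, crucially, does not use $\betapre = 0$. Since any conditioning event of the form $\{\betahatpre \in B\}$ is a function of $\betahatpre$ alone, it leaves the law of $u$ unchanged, so $\expe{u \,|\, \betahatpre \in B} = 0$ and $\var{u \,|\, \betahatpre \in B} = \var{u}$.

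Results \ref{subprop: betahat biased when pretrends false} and \ref{subprop: betatilde biased when pretrends false} are then essentially bookkeeping. For Result \ref{subprop: betahat biased when pretrends false}, I would take conditional expectations of the displayed identity over $\{\betahatpre \in B\}$: the $u$ term drops, and the linear term contributes $\Sigma_{12}\Sigma_{22}^{-1}(\expe{\betahatpre \,|\, \betahatpre \in B} - \betapre)$. For Result \ref{subprop: betatilde biased when pretrends false}, substitute the identity into $\betatildepost = \betahatpost - \Sigma_{12}\Sigma_{22}^{-1}\betahatpre$; the $\Sigma_{12}\Sigma_{22}^{-1}\betahatpre$ terms cancel, leaving $\betatildepost = \betapost - \Sigma_{12}\Sigma_{22}^{-1}\betapre + u$, so taking the conditional expectation kills $u$ and gives the stated constant.

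Result \ref{subprop: betahat has lowe variance when condition on BNS - general} is the substantive part. By independence of $u$ from $\betahatpre$, a law-of-total-variance argument using the decomposition gives
$$\var{\betahatpost \,|\, \betahatpre \in B_{NS}} = \var{u} + \var{\Sigma_{12}\Sigma_{22}^{-1}\betahatpre \,|\, \betahatpre \in B_{NS}},$$
and since $\Sigma_{11} = \var{u} + \var{\Sigma_{12}\Sigma_{22}^{-1}\betahatpre}$, it suffices to show that conditioning $\betahatpre$ on the box $B_{NS}$ does not increase the variance of the scalar linear functional $\Sigma_{12}\Sigma_{22}^{-1}\betahatpre$. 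This is the one point where real work is needed, and the reason it cannot simply be quoted from Proposition \ref{prop: performance under parallel pre-trends}.\ref{subprop: var betahat reduced}: with $\betapre \neq 0$ the set $B_{NS}$ is no longer symmetric about the mean of $\betahatpre$, so a symmetrization argument is unavailable and one needs a genuine inequality. The tool I would use is the Brascamp–Lieb inequality: $\betahatpre$ has density proportional to $e^{-V}$ with $V(b) = \tfrac12(b-\betapre)^\top \Sigma_{22}^{-1}(b-\betapre)$, so $\nabla^2 V = \Sigma_{22}^{-1}$; restricting to the convex set $B_{NS}$ adds a convex barrier to $V$, which only increases $\nabla^2 V$ in the positive-semidefinite (Loewner) order, so $\nabla^2 V \succeq \Sigma_{22}^{-1}$ still holds on the support. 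Brascamp–Lieb then yields, for any fixed $\ell$, $\var{\ell^\top \betahatpre \,|\, \betahatpre \in B_{NS}} \leq \ell^\top \Sigma_{22}\ell = \var{\ell^\top \betahatpre}$; taking $\ell^\top = \Sigma_{12}\Sigma_{22}^{-1}$ completes the proof. (The one-dimensional special case — truncating a Gaussian to an interval weakly reduces its variance — is the familiar intuition behind this step; the Brascamp–Lieb route is what makes the multivariate, non-mean-centered box case go through cleanly.)
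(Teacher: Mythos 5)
Your proof is correct, and Parts \ref{subprop: betahat biased when pretrends false} and \ref{subprop: betatilde biased when pretrends false} follow the paper's route exactly: your residual $u$ is just a recentered version of the paper's $\betatildepost$, and its independence from $\betahatpre$ (the paper's Lemma \ref{lem:independence of betatilde and betapre}) is precisely what makes the conditional expectations collapse. The genuine divergence is in how you close Part \ref{subprop: betahat has lowe variance when condition on BNS - general}. Both you and the paper reduce the claim to showing that truncating $\betahatpre$ to the convex, but not mean-centered, box $B_{NS}$ cannot increase the variance of the scalar $\Sigma_{12}\Sigma_{22}^{-1}\betahatpre$. The paper proves the stronger matrix statement that $\var{\betahatpre \mid \betahatpre \in B_{NS}} - \Sigma_{22}$ is negative semi-definite (Lemma \ref{lem: definiteness of jacobian of truncated expectation}) by showing that $H(\mu)=\int_B \phi_\Sigma(y-\mu)\,dy$ is log-concave in $\mu$ via Prekopa's theorem and then differentiating twice; you instead invoke the Brascamp--Lieb variance inequality for the log-concave density obtained by restricting the Gaussian to a convex set. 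These are close cousins --- both are consequences of log-concavity theory --- and your route is shorter because it only needs the scalar inequality for a single fixed linear functional. What the paper's self-contained derivation buys is the intermediate identity $D_\mu \expe{Y \mid Y\in B,\mu}=\var{Y \mid Y\in B,\mu}\Sigma^{-1}$, which it reuses in the proof of Proposition \ref{prop: relative magnitudes under lienar pre-trends}, so your approach would still need that identity separately elsewhere. Two small points to tidy: applying Brascamp--Lieb to a density with a hard convex barrier (the indicator of $B_{NS}$) strictly requires the standard approximation of the indicator by smooth convex potentials, which you should at least mention; and your observation that the symmetrization argument is unavailable when $\betapre \neq 0$ is apt and correctly identifies why a genuine inequality is needed here.
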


Proposition \ref{prop: bias when parallel trends is false - general} gives a general formula for the expectation of $\betahatpost$ and $\betatildepost$, from which we see that if the parallel pre-trends assumption does not hold, both estimators will tend to be biased relative to the population difference-in-difference coefficient $\betapost$, which itself will differ from the true treatment effect if the pre-trend would have continued in the post period. Nonetheless, if we condition on none of the pre-treatment coefficients being significant, the true variance of $\betahatpost$ is smaller than the unconditional covariance. Thus, confidence intervals using the traditional standard errors may have coverage above or below the nominal rate depending on the relative size of the bias of the coefficient relative to that of the standard error -- a fact which we confirm in our simulations below.

I now turn to understanding the direction of the bias in the case where the pre-trends are monotone. To do this, I first introduce the following assumption on the covariance structure.

\begin{asm}\label{Assumption: covariance matrix structure}
$\Sigma$ has a common term $\sigma^2$ on the diagonal and a common term $\rho>0$ off of the diagonal, with $\sigma^2 > \rho$.
\end{asm} 

This is a natural assumption if the data used in the study come from repeated cross-sections (e.g. the ACS or Census), and the outcome of interest has constant variance across periods. To see why this is the case, again let $\Delta \bar{y_t}$ denote the difference in sample means between treatment and control in period $t$, and $\Delta \mu_t$ the corresponding difference in population. Note that for any $t$, $\betahat_t = \Delta \ybar_t - \Delta \ybar_0$, and its (unconditional) expectation is $\beta_t = \Delta \mu_t - \Delta \mu_0$. Let $\epsilon^\Delta_t = \Delta \ybar_t - \Delta \mu_t$. Then $\Cov \left( \betahat_j, \betahat_k \right) = \Cov \left( \epsilon^\Delta_j - \epsilon^\Delta_0 , \epsilon^\Delta_k - \epsilon^\Delta_0  \right) $. In repeated cross-sections, however, the measurement error across periods should be independent, so that for $j\neq k$, $\Cov \left( \betahat_j, \betahat_k \right) = \var{\epsilon^\Delta_0} \equiv \rho > 0$. Additionally, $\var{\betahat_k} = \var{\epsilon^\Delta_0} + \var{\epsilon^\Delta_k}$, which is constant in $k$ if $\var{\epsilon^\Delta_k}$ is the same across $k$. I note that for the result in Proposition \ref{prop: relative magnitudes under lienar pre-trends}, Assumption \ref{Assumption: covariance matrix structure} can be relaxed so that $\var{\epsilon^\Delta_k}$ need only be equal for all $k<0$, which allows for heterogenous treatment effects. I leave it to future work to determine whether the following result holds in a panel setting with serially correlated individual errors -- I conjecture that it should still hold so long as the serial correlation is not too large.

\begin{prop}[Sign of bias under upward pre-trend] \label{prop: relative magnitudes under lienar pre-trends}
Suppose that $\beta_{t} < 0$ for all $t<0$. If Assumption \ref{Assumption: covariance matrix structure} holds, then

\begin{enumerate}
    \item
     $\expe{\betahatpost \,|\, \betahatpre \in B_{NS}} > \betapost.$
    
    \item
     $\expe{\betatildepost \,|\, \betahatpre \in B_{NS}} > \betapost.$
    
    %\item
    % (Conjecture) $\expe{\betatildepost \,|\, \betahatpre \in B_{NS}} > \expe{\betahatpost \,|\, \betahatpre \in B_{NS}}$
\end{enumerate}

The analogous result holds replacing "$>$" with "$<$" and vice versa.    
\end{prop}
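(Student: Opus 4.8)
\emph{Proof plan.} The plan is to exploit the equicorrelation structure of Assumption~\ref{Assumption: covariance matrix structure} to put $\parens{\betahatpost,\betahatpre}$ in one-factor form and thereby reduce everything to the sign of a single scalar. Since $\Sigma$ equals $(\sigma^2-\rho)$ times the $(K+1)\times(K+1)$ identity plus $\rho\,\1\1'$ (and $\sigma^2>\rho>0$, so the square roots below are real), I would introduce i.i.d.\ $\normnot{0}{1}$ variables $\zeta_0,\zeta_1,\dots,\zeta_K,\eta$ and write
\begin{equation*}
\betahatpost=\betapost+\sqrt{\sigma^2-\rho}\,\zeta_0+\sqrt{\rho}\,\eta,\qquad \betahat_{-k}=\beta_{-k}+\sqrt{\sigma^2-\rho}\,\zeta_k+\sqrt{\rho}\,\eta\quad(k=1,\dots,K),
\end{equation*}
which one checks reproduces the covariance matrix $\Sigma$. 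The crucial point is that the event $\{\betahatpre\in B_{NS}\}$ is measurable with respect to $\betahatpre$, hence with respect to $(\zeta_1,\dots,\zeta_K,\eta)$ only, so $\zeta_0$ is independent of it; taking conditional expectations and using $\expe{\zeta_0 | \betahatpre\in B_{NS}}=\expe{\zeta_0}=0$ yields
\begin{equation*}
\expe{\betahatpost | \betahatpre\in B_{NS}}-\betapost=\sqrt{\rho}\;\expe{\eta | \betahatpre\in B_{NS}}.
\end{equation*}
Thus part~1 is equivalent to the claim that $\expe{\eta | \betahatpre\in B_{NS}}>0$.

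To prove that claim I would condition on $\eta=e$. Under Assumption~\ref{Assumption: covariance matrix structure} the diagonal of $\Sigma$ is constant ($=\sigma^2$), so $B_{NS}=\{|\betahat_{-j}|\le a\text{ for all }j\}$ with $a=c_\alpha\sigma$; and since $\zeta_1,\dots,\zeta_K$ are independent of one another and of $\eta$,
\begin{equation*}
\prob{\betahatpre\in B_{NS}\mid\eta=e}=\prod_{k=1}^{K} q\!\parens{\beta_{-k}+\sqrt{\rho}\,e},\qquad q(t):=\Phi\!\left(\tfrac{a-t}{\sqrt{\sigma^2-\rho}}\right)-\Phi\!\left(\tfrac{-a-t}{\sqrt{\sigma^2-\rho}}\right)
\end{equation*}
is the probability that a $\normnot{t}{\sigma^2-\rho}$ variable lies in $[-a,a]$. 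Rewriting $q(t)=\Phi\!\left(\tfrac{a-t}{\sqrt{\sigma^2-\rho}}\right)+\Phi\!\left(\tfrac{a+t}{\sqrt{\sigma^2-\rho}}\right)-1$ shows by an elementary one-line argument that $q$ is even and strictly decreasing in $|t|$. Consequently the conditional density of $\eta$ given $\betahatpre\in B_{NS}$ is proportional to $\phi(e)\prod_{k=1}^{K}q(\beta_{-k}+\sqrt{\rho}\,e)$, so $\expe{\eta | \betahatpre\in B_{NS}}$ has the same sign as $\int_{\R}e\,\phi(e)\prod_{k=1}^{K}q(\beta_{-k}+\sqrt{\rho}\,e)\,de$.

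I would then evaluate the sign of that integral by splitting it at $0$ and substituting $e\mapsto-e$ on $(-\infty,0]$, obtaining
\begin{equation*}
\int_0^\infty e\,\phi(e)\left[\prod_{k=1}^{K} q\!\parens{\beta_{-k}+\sqrt{\rho}\,e}-\prod_{k=1}^{K} q\!\parens{\beta_{-k}-\sqrt{\rho}\,e}\right]de .
\end{equation*}
For $e>0$ and each $k$, the hypothesis $\beta_{-k}<0$ gives $|\beta_{-k}+\sqrt{\rho}\,e|<|\beta_{-k}-\sqrt{\rho}\,e|$, so by monotonicity of $q$ we get $q(\beta_{-k}+\sqrt{\rho}\,e)>q(\beta_{-k}-\sqrt{\rho}\,e)>0$; multiplying these strict inequalities over $k$ makes the bracket strictly positive for every $e>0$, hence the integral is strictly positive and $\expe{\eta | \betahatpre\in B_{NS}}>0$. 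Together with the bias identity above this establishes part~1. Part~2 is then immediate: Result~\ref{subprop: betatilde biased when pretrends false} of Proposition~\ref{prop: bias when parallel trends is false - general} gives $\expe{\betatildepost | \betahatpre\in B_{NS}}=\betapost-\Sigma_{12}\Sigma_{22}^{-1}\betapre$, and since $\Sigma_{22}\1=(\sigma^2+(K-1)\rho)\1$ under Assumption~\ref{Assumption: covariance matrix structure} we have $\Sigma_{12}\Sigma_{22}^{-1}=\rho\,\1'\Sigma_{22}^{-1}=\tfrac{\rho}{\sigma^2+(K-1)\rho}\,\1'$, so the right-hand side equals $\betapost-\tfrac{\rho}{\sigma^2+(K-1)\rho}\sum_k\beta_{-k}>\betapost$ because $\rho>0$, $\sigma^2>\rho$, and $\sum_k\beta_{-k}<0$. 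The reversed inequalities ($\beta_t>0$ for all $t<0$) follow by replacing $y$ with $-y$, which flips the sign of every $\beta_{-k}$ and hence, by the argument above, of $\expe{\eta|\betahatpre\in B_{NS}}$ and of $\sum_k\beta_{-k}$.

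The step I expect to be the main obstacle is the very first one --- recognizing the one-factor representation and, in particular, that the idiosyncratic component $\zeta_0$ of $\betahatpost$ is independent of the pre-trends conditioning event, so that the entire conditional bias collapses to $\sqrt{\rho}\,\expe{\eta|\betahatpre\in B_{NS}}$. After that, pinning down the sign of this scalar via the fold-at-$0$ symmetrization (using only that $q$ is even and unimodal) and the computation $\Sigma_{12}\Sigma_{22}^{-1}=\tfrac{\rho}{\sigma^2+(K-1)\rho}\1'$ are routine. Equicorrelation is used twice and is essential: it makes the common factor $\eta$ load equally on every coordinate, so the conditioning interacts with $\eta$ in the clean product form exploited above, and it makes $\Sigma_{12}\Sigma_{22}^{-1}$ a nonnegative multiple of $\1'$; without it $\betahatpost$'s conditional bias need not even have a definite sign.
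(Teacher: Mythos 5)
Your proof is correct, and for Part 1 it takes a genuinely different and more elementary route than the paper. The paper reaches the same conclusion by combining its general bias formula $\expe{\betahatpost \mid \betahatpre \in B} = \betapost + \Sigma_{12}\Sigma_{22}^{-1}(\expe{\betahatpre \mid \betahatpre \in B} - \betapre)$ with a dedicated lemma showing that $\iota'\expe{Y \mid Y \in B} > 0$ for an equicorrelated mean-zero normal truncated to a rectangle that has been shifted upward from symmetry; that lemma in turn rests on log-concavity of the truncation probability in the mean (Prekopa's theorem), the identity $D_\mu \expe{Y \mid Y \in B} = \var{Y \mid Y \in B}\Sigma^{-1}$, a mean-value-theorem argument in the lower truncation endpoints, and an iterated-conditioning step exploiting that $Y_{-K} \mid Y_K$ is again equicorrelated. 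Your one-factor representation $\betahat_t - \beta_t = \sqrt{\sigma^2-\rho}\,\zeta_t + \sqrt{\rho}\,\eta$ bypasses all of that: it collapses the conditional bias to $\sqrt{\rho}\,\expe{\eta \mid \betahatpre \in B_{NS}}$ (since $\zeta_0$ is independent of the conditioning event), factorizes $\prob{\betahatpre \in B_{NS} \mid \eta = e}$ into a product of the even, unimodal functions $q(\beta_{-k}+\sqrt{\rho}e)$, and settles the sign by folding a one-dimensional integral at zero — with $\beta_{-k}<0$ entering exactly through $|\beta_{-k}+\sqrt{\rho}e| < |\beta_{-k}-\sqrt{\rho}e|$ for $e>0$. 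What the paper's machinery buys is generality: its Lemma on the Jacobian of the truncated mean applies to arbitrary convex truncation sets and is reused elsewhere (e.g., for the conditional-variance-reduction results), whereas your argument is tailored to equicorrelation plus rectangular truncation. What your argument buys is transparency and self-containedness — no Prekopa, no matrix calculus, no appeal to marginal densities of truncated normals — and strictness of the inequality falls out immediately. Your Part 2 coincides with the paper's (both reduce to $\Sigma_{12}\Sigma_{22}^{-1}$ being a positive multiple of $\iota'$; your eigenvector computation replaces the paper's Sherman--Morrison calculation), and the sign-flip argument for the reversed case is the same.
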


Proposition \ref{prop: relative magnitudes under lienar pre-trends} derives the direction of the bias from conditioning on not observing a significant pre-trend in the case where in population there is an upward pre-trend, in the sense that all of the population pre-period coefficients are less than zero. A leading example of this is when the pre-trend is monotone, although the condition in Proposition \ref{prop: relative magnitudes under lienar pre-trends} is weaker than this. It follows from the proposition that when there is truly an upward pre-trend, $\betahatpost$ and $\betatildepost$ are both biased upwards conditional on the pre-test, thus exacerbating the bias relative to the true treatment effect, assuming the monotone trend would have continued into the post period.

\subsection{Median-unbiased estimation and corrected confidence intervals under non-parallel pre-trends\label{subsec: TN confidence intervals}}

 In this section, I derive alternative estimators, along with associated confidence intervals, that correct for the bias induced by conditioning on having not observed a significant pre-trend when the parallel pre-trends assumption fails. 
 
 The first estimator is an optimal median-unbiased estimator for the population difference-in-difference coefficient $\betapost$, i.e. the population limit of OLS without conditioning. Although $\betapost$ is generally not equal to the treatment effect of interest when the parallel pre-trends assumption fails, the extent to which $\betapost$ differs from the treatment effect is given by the omitted variable bias formula, and so use of this estimator allows us to recover this intuition from OLS even when we condition on having observed non-significant pre-trends. 
 
 I also show that we can modify the first estimator to obtain a median-unbiased estimate of the population regression coefficient from a specification that allows the control group to have a polynomial trend in the pre-period. Consider, for instance, the regression specification:
 
 \begin{align} \label{eq: regression specification with trend}
y_{it} = \alpha_t +  \gamma_{pre} \times t \times Treatment_i + \sum_{s > 0} \gamma_s \times Treatment_i \times 1[t =s]  + \gamma_{main} \, Treatment_i + \epsilon_{it}.
\end{align}

\noindent If there is truly a linear  pre-trend, then the population OLS coefficient $\gammapost := \gamma_1$ will equal the treatment effect in the first post-period assuming the pre-trend would have continued absent treatment. If there is a non-linear pre-trend, then $\gammapost$ may serve as a close approximation to the treatment effect of interest to the extent that the best-fit linear pre-trend serves as a close approximation to the counterfactual trend. The second estimator I derive provides median-unbiased estimates and valid confidence intervals for $\gammapost$, conditional on having observed non-significant pre-trends. I also show that an analogous estimator can be derived for modifications of specification (\ref{eq: regression specification with trend}) that allow for higher-order (e.g. quadratic) polynomial pre-trends.  
 
In order to derive these estimators, it is first useful to note that we can write $\gammapost = \betapost - \eta_{\gamma}' \betapre$ for a suitably defined vector $\eta_{\gamma}$. Intuitively, in settings where each pre-period is given equal weight in the regression -- e.g. balanced panel or repeated cross-sections of equal size -- the difference between $\betapost$ in regression (\ref{eq: regression specification}) and $\gammapost$ in regression (\ref{eq: regression specification with trend}) will be equal to the predicted value at $t=1$ of the best-fit line through the pre-trends coefficients $\beta_{-K},\ldots, \beta_0$ (where $\beta_0 =0$). That is, letting $X$ be the matrix with 1s in the first column and $(0,-1,...,-K)$ in the second column, and $M_{-1}$ the selection matrix that selects all but the first column, $\eta_\gamma' = (1, 1)' (X'X)^{-1} X' M_{-1} $. Lemma \ref{lem: relationship between gamma and beta} in the Appendix formalizes the relationship between $\betapost$ and $\gammapost$, and shows that if we instead wanted to include a quadratic (or higher-order polynomial) trend in (\ref{eq: regression specification with trend}) in place of the linear trend, then we would merely need to add an additional column to $X$ with quadratic (or higher-order polynomial) time terms.

Having noted that $\betapost$ and $\gammapost$ are both linear combinations of $\beta = (\betapost, \betapre)$, we can make use of results by \citet{lee_exact_2016} and \citet{andrews_identification_2017} to derive the distribution of linear combinations of $\betahat$ conditional on $\betahatpre$ falling in $B_{NS}$. To do this, we first note that we can define a matrix $A^{NS}$ and vector $b^{NS}$ such that $\betahatpre \in B_{NS}$ iff $A^{NS} \betahat \leq b^{NS}$. In particular, it is easy to verify that this holds for $A^{NS} = \twobytwomat{0_{1 \times K }}{I_{K \times K}}{0_{1 \times K}}{-I_{K \times K}} $ and $b^{NS} = \twovec{c_\alpha \times \sqrt{diag(\Sigma)}}{c_\alpha \times \sqrt{diag(\Sigma)}}$. We now derive the distribution for linear combinations of $\betahat$ conditional on $A \betahat \leq b$ for arbitrary $A$ and $b$, noting that this nests conditioning on $\betahatpre \in B_{NS}$ as a special case.

\begin{prop}\label{prop: TN distribution}
Let $\betahat = (\betahatpost, \betahatpre)$ and $\eta  \neq 0$ be in $\reals^{K+1}$. Let $B = \{\beta | A \beta \leq b \} $ for some $A$ and $b$. Define $c = \Sigma \eta / (\eta' \Sigma \eta) $ and $Z = (I - c \eta') \betahat$. Then the distribution of $\eta' \betahat$ conditional on $\betahat  \in B$ and $Z$ is truncated normal with (untruncated) mean $\eta' \beta$ and (untruncated) variance $\eta' \Sigma \eta$, with truncation points $V^{-}$ and $V^{+}$ given by

\begin{align}
& V^{-}(z) = \underset{\{j : (Ac)_j < 0 \}}{max}  \frac{b_j - (Az)_j}{(Ac)_j} \\
& V^{+}(z) = \underset{\{j : (Ac)_j > 0 \}}{min}  \frac{b_j - (Az)_j}{(Ac)_j}.
\end{align}

\end{prop}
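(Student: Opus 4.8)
The plan is to run the standard ``polyhedral lemma'' argument of \citet{lee_exact_2016}: decompose $\betahat$ into the scalar $\eta'\betahat$ and the residual $Z$, show that these two are independent, and then re-express the polyhedral conditioning event $\{A\betahat \leq b\}$ as a data-dependent interval constraint on $\eta'\betahat$ together with an event that depends on $Z$ alone.

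First I would record the algebraic identity $\betahat = c\,(\eta'\betahat) + Z$, which is immediate since $c\eta' + (I - c\eta') = I$. Next I would establish that $\eta'\betahat$ and $Z$ are jointly normal --- both being linear in the normal vector $\betahat$ --- and uncorrelated: a direct computation gives $\Cov(\eta'\betahat,\, Z) = \eta'\Sigma(I - c\eta')' = \eta'\Sigma - (\eta'\Sigma\eta)\,c' = \eta'\Sigma - \eta'\Sigma = 0$, using the symmetry of $\Sigma$ and the definition $c = \Sigma\eta/(\eta'\Sigma\eta)$ (note $\eta'\Sigma\eta > 0$ since $\eta \neq 0$ and $\Sigma$ is nonsingular). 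Joint normality plus zero covariance yields independence, so the conditional law of $\eta'\betahat$ given $Z = z$ is the same as its marginal law, namely $\normnot{\eta'\beta}{\eta'\Sigma\eta}$.

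The key step is rewriting the conditioning event. Substituting the decomposition, $A\betahat = (Ac)\,(\eta'\betahat) + Az$ on $\{Z = z\}$, so $\{A\betahat \leq b\}$ holds iff $(Ac)_j\,(\eta'\betahat) \leq b_j - (Az)_j$ for every $j$. Partition the rows by the sign of $(Ac)_j$: rows with $(Ac)_j > 0$ give upper bounds $\eta'\betahat \leq (b_j - (Az)_j)/(Ac)_j$, whose minimum is $V^{+}(z)$; rows with $(Ac)_j < 0$ give lower bounds whose maximum is $V^{-}(z)$; and rows with $(Ac)_j = 0$ impose $0 \leq b_j - (Az)_j$, a constraint on $z$ alone. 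Hence, writing $V^{0}(z) := \min_{\{j:(Ac)_j = 0\}}\,(b_j - (Az)_j)$, we have $\{A\betahat \leq b\} = \{V^{-}(Z) \leq \eta'\betahat \leq V^{+}(Z)\} \cap \{V^{0}(Z) \geq 0\}$. Conditioning additionally on $Z = z$, the event $\{V^{0}(z) \geq 0\}$ is either certain or impossible and in either case does not restrict $\eta'\betahat$, so on the conditioning set the distribution of $\eta'\betahat$ given $\betahat \in B$ and $Z = z$ is exactly the $\normnot{\eta'\beta}{\eta'\Sigma\eta}$ law truncated to $[V^{-}(z),\, V^{+}(z)]$, as claimed. (Specializing $A, b$ to the $A^{NS}, b^{NS}$ of the preceding paragraph recovers conditioning on $\betahatpre \in B_{NS}$.)

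I expect the only real subtlety --- the main obstacle --- to be the bookkeeping around the degenerate rows with $(Ac)_j = 0$ and the edge cases where $\{j : (Ac)_j > 0\}$ or $\{j : (Ac)_j < 0\}$ is empty, in which case the corresponding truncation point is $+\infty$ or $-\infty$ under the usual convention for min/max over the empty set; one should check the statement and proof remain valid in those cases (the truncated normal degenerates to a one-sided or untruncated normal). Everything else is routine Gaussian conditioning, so I would not belabor it.
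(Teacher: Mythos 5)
Your proposal is correct and is essentially the paper's approach: the paper's proof of this proposition consists of a one-line citation to Lemma 5.1 and Theorem 5.2 of \citet{lee_exact_2016}, and your argument is precisely the polyhedral-lemma proof of those results (the decomposition $\betahat = c\,(\eta'\betahat) + Z$, independence via zero covariance, and the sign-partition of the rows of $Ac$). Your explicit handling of the $(Ac)_j = 0$ rows and the empty-index-set conventions is a fine level of care that the paper, by deferring to the citation, does not spell out.
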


\begin{prop}
\label{prop: proper coverage of TN estimators}

Let $\eta  \neq 0$ be in $\reals^{K+1}$, and $B = \{\beta | A \beta \leq b \} $ for some $A$ and $b$. Assume that $\betahat \in B$ with positive probability, and that $\Sigma$ is full rank. Let $F_{\mu, \sigma^2}^{[l,u]}$ denote the CDF of the normal distribution with mean $\mu$ and variance $\sigma^2$ truncated to $[l,u]$. Define $\hat{b}_\alpha( \eta' \betahat, z)$ to be the value of $x$ that solves $F_{x, \eta' \Sigma \eta}^{[V^{-}(z),V^{+}(z)]}(\eta' \betahat) = \alpha$, for $V^{-}$ and $V^{+}$ as defined in Proposition \ref{prop: TN distribution}. Then for any $\alpha \in (0,1)$, 
%conditional on $\betahat \in B$, %$\hat{b}_\alpha( \betahatpost, z)$ is well-defined with probability 1, and

$$P\left( \hat{b}_\alpha(\eta' \betahat, Z) \leq \eta' \beta \,|\, \betahat \in B \right) = \alpha.$$

Further, if the parameter space for $\beta$ is an open set, then $\hat{b}_\alpha$ is uniformly most concentrated in the class of level-$\alpha$ quantile-unbiased
estimators, in the sense that for any other level-$\alpha$ quantile unbiased estimator $\tilde{b}_\alpha$,
and any loss function $L(x, \eta' \beta)$ that attains its minimum at $x = \eta' \beta$ and is increasing as $x$ moves away from $\eta' \beta$,

$$\expe{L\left( \hat{b}_\alpha(\eta' \beta,Z) \, , \eta' \beta \right) \,|\, \betahat \in B} \leq \expe{L \left( \tilde{b}_\alpha(\eta' \beta,Z) \, , \eta' \beta \right) \,|\, \betahat \in B} .$$

Thus, conditional on $\betahat \in B$, $\hat{b}_{0.5}(\eta' \betahat, Z)$ is an (optimal) median-unbiased estimate of $\eta' \beta$, and the interval $[\hat{b}_{\alpha/2}(\eta' \betahat, Z) , \hat{b}_{1-\alpha/2}(\eta' \betahat, Z)]$ is a $1-\alpha$ confidence interval.
\end{prop}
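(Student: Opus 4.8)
The plan is to obtain the coverage statement as a direct consequence of Proposition \ref{prop: TN distribution} and the probability integral transform, and then the optimality claim from the exponential-family structure of the conditional law together with the optimal quantile-unbiased estimation theory of \citet{pfanzagl_parametric_1994} (as specialized to post-selection inference by \citet{lee_exact_2016} and \citet{andrews_identification_2017}). First I would fix a value $z$ of $Z = (I - c\eta')\betahat$ and argue conditionally on $\{\betahat \in B,\, Z = z\}$. By Proposition \ref{prop: TN distribution}, on this event $\eta'\betahat$ has a $\N(\eta'\beta,\,\eta'\Sigma\eta)$ law truncated to $[V^{-}(z), V^{+}(z)]$, with truncation points determined by $z$ and the known $A, b, \Sigma, \eta$. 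Full-rank $\Sigma$ gives $\eta'\Sigma\eta>0$, and since $\betahat\in B$ with positive probability the interval $[V^{-}(z),V^{+}(z)]$ is nondegenerate for $Z$-almost every $z$ in the conditioning event, so $F_{x,\eta'\Sigma\eta}^{[V^{-}(z),V^{+}(z)]}$ is a genuine CDF; the probability integral transform then gives that $F_{\eta'\beta,\eta'\Sigma\eta}^{[V^{-}(z),V^{+}(z)]}(\eta'\betahat)$ is $\uniform{0}{1}$ conditional on $\{\betahat\in B,\,Z=z\}$.

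Next I would invoke the standard fact that $x \mapsto F_{x,\sigma^2}^{[l,u]}(w)$ is continuous and strictly decreasing on $\reals$ for fixed $w\in(l,u)$, $\sigma^2>0$ — the truncated normal family has monotone likelihood ratio in its location parameter. This makes the defining equation $F_{x,\eta'\Sigma\eta}^{[V^{-}(z),V^{+}(z)]}(\eta'\betahat)=\alpha$ uniquely solvable, with solution $\hat b_\alpha(\eta'\betahat,z)$, and gives the equivalence $\hat b_\alpha(\eta'\betahat,z)\le t \iff F_{t,\eta'\Sigma\eta}^{[V^{-}(z),V^{+}(z)]}(\eta'\betahat)\le\alpha$ for every real $t$. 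Taking $t=\eta'\beta$ and combining with the previous paragraph,
\[
\prob{\hat b_\alpha(\eta'\betahat, Z)\le\eta'\beta \,\big|\, \betahat\in B,\, Z=z} = \prob{F_{\eta'\beta,\eta'\Sigma\eta}^{[V^{-}(z),V^{+}(z)]}(\eta'\betahat)\le\alpha \,\big|\, \betahat\in B,\, Z=z} = \alpha
\]
for $Z$-a.e.\ $z$; integrating over the conditional law of $Z$ given $\betahat\in B$ yields the displayed coverage identity. The confidence-interval claim then follows by instantiating this at levels $\alpha/2$ and $1-\alpha/2$, using monotonicity of $\alpha\mapsto\hat b_\alpha$ to see that the two noncoverage events are disjoint, so the interval with endpoints $\hat b_{\alpha/2}(\eta'\betahat,Z)$ and $\hat b_{1-\alpha/2}(\eta'\betahat,Z)$ has coverage exactly $1-\alpha$; median-unbiasedness of $\hat b_{0.5}$ is the case $\alpha=1/2$.

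For the optimality statement I would again condition on $Z=z$: on $\{\betahat\in B,\,Z=z\}$ the law of $\eta'\betahat$ is a normal truncated to the \emph{known} interval $[V^{-}(z),V^{+}(z)]$ with the single unknown mean $\eta'\beta$, a one-parameter exponential family with monotone likelihood ratio, and the openness of the parameter space for $\beta$ precludes boundary issues. The theory of \citet{pfanzagl_parametric_1994} then gives that, within the class of level-$\alpha$ quantile-unbiased estimators, the CDF-inversion estimator $\hat b_\alpha$ is uniformly most concentrated about $\eta'\beta$ — its distribution is pushed toward $\eta'\beta$ relative to any competitor on both sides — and writing any loss $L(\cdot,\eta'\beta)$ that is minimized at $\eta'\beta$ and increasing away from it as a nonnegative mixture of one-sided threshold losses converts this concentration ordering into the asserted risk inequality, first conditionally on $Z=z$ and then, upon integrating over $Z$, unconditionally. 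This mirrors the argument in \citet{lee_exact_2016} and \citet{andrews_identification_2017}.

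The coverage part is essentially bookkeeping on top of Proposition \ref{prop: TN distribution} and the probability integral transform; I expect the optimality claim to be the main obstacle. There one must verify that conditioning simultaneously on the ancillary-type statistic $Z$ and on the polyhedral selection event leaves a genuine one-parameter exponential family in $\eta'\beta$, handle the measure-zero set of $z$ for which $[V^{-}(z),V^{+}(z)]$ degenerates to a point, and argue that conditional optimality given $Z$ together with marginal quantile-unbiasedness of competitors delivers marginal optimality — this last step is where care is needed and where I would lean on the corresponding results of \citet{lee_exact_2016} and \citet{pfanzagl_parametric_1994}. A lesser but necessary point is establishing strict (not merely weak) monotonicity of $F_{x,\sigma^2}^{[l,u]}(w)$ in $x$, which is what makes both the well-definedness of $\hat b_\alpha$ and the exactness of the coverage go through.
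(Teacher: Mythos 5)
Your proposal is correct and follows essentially the same route as the paper: the paper's proof is just an appeal to Theorem 5 of \citet{andrews_identification_2017} --- whose internals (probability integral transform and CDF inversion for coverage, Pfanzagl-style exponential-family optimality for the concentration claim) you have unpacked --- plus a verification that the truncation interval $[V^{-}(Z), V^{+}(Z)]$ is nondegenerate almost surely on the selection event. The one place you are thinner than the paper is exactly that verification: your assertion that ``$\betahat \in B$ with positive probability'' implies nondegeneracy for a.e.\ $z$ needs the argument the paper actually gives --- $V^{-}(z) = V^{+}(z)$ forces $\eta'\betahat = V^{-}(z)$, and $\prob{\eta'\betahat = V^{-}(Z) \,|\, \betahat \in B} = 0$ because $\eta'\betahat$ is independent of $Z$ and has a continuous marginal law (using $\Sigma$ full rank and $\eta \neq 0$) --- but you correctly identify this as the point requiring care.
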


\begin{cor}[Median Unbiased Estimator for $\betapost$] Let $\betahatpost^{TN} := \hat{b}_{0.5}
(\eta'\betahat,Z)$ be the estimator from Proposition \ref{prop: proper coverage of TN estimators} corresponding with $A = A^{NS}$, $b = b^{NS}$ and $\eta = e_1$, the first basis vector. Then, conditional on $\betahatpre \in B_{NS}$, $\betahatpost^{TN}$ is an optimal median unbiased estimator for $\betapost$. Likewise, $CI^{TN}_{\beta}:= [\hat{b}_{\alpha/2}
(\eta'\betahat,Z), \hat{b}_{1-\alpha/2}
(\eta'\betahat,Z)]$ is a $1-\alpha$ confidence interval for $\betapost$.
\end{cor}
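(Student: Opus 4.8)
The plan is to derive this corollary as a direct instantiation of Proposition~\ref{prop: proper coverage of TN estimators}, taking $A = A^{NS}$, $b = b^{NS}$, and $\eta = e_1$; essentially all of the work is in checking that the hypotheses of that proposition hold for these choices and that its conclusions, once specialized, say what is claimed. The one substantive verification is that the conditioning event $\{\betahatpre \in B_{NS}\}$ really is of the linear-inequality form $\{\betahat \in B\}$ with $B = \{\beta : A^{NS}\beta \le b^{NS}\}$. Writing $\betahat = (\betahatpost, \betahatpre)$ and using the block structure of $A^{NS}$ (a zero block in the column multiplying $\betahatpost$, and $+I$ and $-I$ blocks in the columns multiplying $\betahatpre$), the $j$th row of the top block reads $\betahat_{pre,j} \le c_\alpha \sqrt{\Sigma_{jj}}$ and the $j$th row of the bottom block reads $-\betahat_{pre,j} \le c_\alpha \sqrt{\Sigma_{jj}}$; the conjunction over all $j$ is exactly $|\betahat_{pre,j}|/\sqrt{\Sigma_{jj}} \le c_\alpha$ for all $j$, i.e. $\betahatpre \in B_{NS}$, while $\betahatpost$ enters no constraint. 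So conditioning on $\betahatpre \in B_{NS}$ is the same as conditioning on $\betahat \in B$, and Propositions~\ref{prop: TN distribution} and~\ref{prop: proper coverage of TN estimators} apply with this $A$ and $b$.

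With that identification in hand, I would check the remaining hypotheses of Proposition~\ref{prop: proper coverage of TN estimators}: $\eta = e_1 \ne 0$ is immediate; $\Sigma$ full rank is a maintained assumption; and the event $\{\betahatpre \in B_{NS}\}$ has positive probability — the natural, maintained assumption that the pre-test is passed with positive probability, without which the conditional objects are undefined. Because $\beta = (\betapost,\betapre)$ with $\eta = e_1$, we have $\eta'\beta = \betapost$ and $\eta'\betahat = \betahatpost$, and $Z = (I - c\eta')\betahat$ with $c = \Sigma e_1/\Sigma_{11}$; moreover the parameter space for $\beta$ is all of $\reals^{K+1}$, an open set, so the uniform-concentration (optimality) clause of the proposition is in force. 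Specializing the proposition to $\alpha = 1/2$ then gives $P(\hat b_{0.5}(\betahatpost, Z) \le \betapost \mid \betahatpre \in B_{NS}) = 1/2$, i.e. $\betahatpost^{TN}$ is median-unbiased for $\betapost$ conditional on passing the pre-test, and the optimality clause yields that it is uniformly most concentrated among conditionally median-unbiased estimators of $\betapost$. Specializing instead to levels $\alpha/2$ and $1-\alpha/2$ gives $P(\hat b_{\alpha/2} \le \betapost \mid \betahatpre \in B_{NS}) = \alpha/2$ and $P(\hat b_{1-\alpha/2} \le \betapost \mid \betahatpre \in B_{NS}) = 1-\alpha/2$; combining these two one-sided statements — using that $\alpha \mapsto \hat b_\alpha(\eta'\betahat, z)$ is monotone so the two endpoints are correctly ordered — yields conditional coverage $1-\alpha$ for the interval with endpoints $\hat b_{\alpha/2}$ and $\hat b_{1-\alpha/2}$, which is $\CITNbeta$.

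I do not expect a serious obstacle: given Propositions~\ref{prop: TN distribution} and~\ref{prop: proper coverage of TN estimators}, the corollary is a matter of matching notation and confirming that the pre-test event is a polyhedron in $\betahat$-space. The only point needing a word of care is the well-definedness and monotonicity of $\alpha \mapsto \hat b_\alpha(\eta'\betahat, z)$ used when combining the two one-sided bounds: conditional on $\betahat \in B$ the truncation points satisfy $V^-(Z) < \betahatpost < V^+(Z)$ almost surely, so the truncation interval is nondegenerate, and for a fixed nondegenerate $[l,u]$ and fixed variance the map $x \mapsto F^{[l,u]}_{x,\eta'\Sigma\eta}(\betahatpost)$ is continuous, strictly decreasing, and onto $(0,1)$ by the stochastic-monotonicity (MLR) property of the truncated normal family; hence each $\hat b_\alpha$ exists and is unique and $\alpha \mapsto \hat b_\alpha$ is strictly monotone. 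These facts already underlie Proposition~\ref{prop: proper coverage of TN estimators}, so no new argument is required and the corollary follows by specialization.
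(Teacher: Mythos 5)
Your proposal is correct and matches the paper's (implicit) argument: the paper treats this corollary as an immediate specialization of Proposition \ref{prop: proper coverage of TN estimators} with $A = A^{NS}$, $b = b^{NS}$, $\eta = e_1$, having already noted in the main text that $\betahatpre \in B_{NS}$ iff $A^{NS}\betahat \le b^{NS}$, which is exactly the identification you verify. Your added remarks on the monotonicity and well-definedness of $\alpha \mapsto \hat{b}_\alpha$ are consistent with what underlies the proposition and do not constitute a departure from the paper's route.
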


\begin{cor}[Median Unbiased Estimator for $\gammapost$] Let $\gammahatpost^{TN} := \hat{b}_{0.5}
(\eta'\betahat ,Z)$ be the estimator from Proposition \ref{prop: proper coverage of TN estimators} corresponding with $A = A^{NS}$, $b = b^{NS}$ and $\eta = (1, \eta_\gamma)$. Then, conditional on $\betahatpre \in B_{NS}$, $\gammahatpost^{TN}$ is an optimal median unbiased estimator for $\gammapost$. Likewise, $CI^{TN}_{\gamma}:= [\hat{b}_{\alpha/2}
(\eta'\betahat,Z), \hat{b}_{1-\alpha/2}
(\eta'\betahat,Z)]$ is a $1-\alpha$ confidence interval for $\gammapost$.
\end{cor}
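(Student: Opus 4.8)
The plan is to read the corollary off as the specialization of Proposition \ref{prop: proper coverage of TN estimators} to the choices $A = A^{NS}$, $b = b^{NS}$, and $\eta = (1,\eta_\gamma)$. The only things to check are that the hypotheses of that proposition hold under these choices, and that the linear functional $\eta'\beta$ it delivers estimates and confidence intervals for is exactly $\gammapost$.

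First I would verify that the conditioning events coincide. As noted in the paragraph preceding Proposition \ref{prop: TN distribution}, with the stated $A^{NS}$ and $b^{NS}$ we have $\betahatpre \in B_{NS} \iff A^{NS}\betahat \leq b^{NS}$; hence, taking $B = \{\beta : A^{NS}\beta \leq b^{NS}\}$, conditioning on $\betahat \in B$ is literally the same as conditioning on $\betahatpre \in B_{NS}$. The remaining hypotheses of Proposition \ref{prop: proper coverage of TN estimators} are in force: $\Sigma$ is full rank (so the truncated-normal construction and the objects $c$ and $Z$ of Proposition \ref{prop: TN distribution} are well defined), and $\betahat \in B$ with positive probability because $B$ is a set of positive Lebesgue measure in $\reals^{K+1}$ (it is $\reals$ times a nondegenerate box) while $\betahat$ is Gaussian with a density.

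Second I would invoke Lemma \ref{lem: relationship between gamma and beta} to identify the target. That lemma expresses $\gammapost$ as a fixed linear combination of $\beta = (\betapost,\betapre)$ — namely $\betapost$ minus the extrapolated value at $t=1$ of the least-squares line through $(\beta_{-K},\ldots,\beta_0)$ with $\beta_0 = 0$ — so that $\gammapost = \eta'\beta$ for $\eta = (1,\eta_\gamma)$ with $\eta_\gamma' = (1,1)'(X'X)^{-1}X'M_{-1}$ (the sign convention in the definition of $\eta_\gamma$ being chosen precisely so this identity holds, given the ordering of $\betapre$ inside $\betahat$ and the pre-period time indices $0,-1,\ldots,-K$). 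Since the first coordinate of $\eta$ equals $1$, $\eta \neq 0$, so Proposition \ref{prop: proper coverage of TN estimators} applies with this $\eta$. Its conclusion, specialized, states exactly that $\P\!\left(\hat b_\alpha(\eta'\betahat, Z) \leq \eta'\beta \mid \betahat \in B\right) = \alpha$ for every $\alpha \in (0,1)$, i.e.\ that $\gammahatpostTN := \hat b_{0.5}(\eta'\betahat, Z)$ is median-unbiased for $\gammapost$ conditional on $\betahatpre \in B_{NS}$, and that $\CITNgamma = [\hat b_{\alpha/2}(\eta'\betahat, Z),\, \hat b_{1-\alpha/2}(\eta'\betahat, Z)]$ is a $1-\alpha$ conditional confidence interval for $\gammapost$. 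The optimality claim — uniform concentration among level-$\alpha$ quantile-unbiased estimators under any loss minimized at $\gammapost$ and increasing away from it — transfers verbatim, since Proposition \ref{prop: proper coverage of TN estimators} already phrases it in terms of the functional $\eta'\beta$, under the hypothesis that the parameter space for $\beta$ is open, which I would note holds since $\beta$ ranges over all of $\reals^{K+1}$ here.

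The only step with genuine content is the algebra behind Lemma \ref{lem: relationship between gamma and beta}, which I would treat as given but flag as the crux: one must check that the population OLS coefficient $\gamma_1$ in specification (\ref{eq: regression specification with trend}) equals $\betapost$ de-trended by the best-fit linear pre-trend, thereby pinning down $\eta_\gamma$. This is a Frisch--Waugh--Lovell / partialling-out computation — regressing the trend-by-treatment regressor out of (\ref{eq: regression specification}) reweights the event-study coefficients through the hat matrix $X(X'X)^{-1}X'$ — and the higher-order polynomial variant follows by appending polynomial columns to $X$, exactly as the lemma asserts. Everything downstream of that identity is a mechanical instantiation of Proposition \ref{prop: proper coverage of TN estimators}.
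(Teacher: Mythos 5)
Your proposal is correct and matches the paper's (implicit) argument: the corollary is stated without a separate proof precisely because it is the instantiation of Proposition \ref{prop: proper coverage of TN estimators} with $A = A^{NS}$, $b = b^{NS}$, and $\eta = (1,\eta_\gamma)$, with Lemma \ref{lem: relationship between gamma and beta} supplying the identity $\gammapost = \eta'\beta$; your verification of the side conditions (positive probability of $B$, $\eta \neq 0$, open parameter space) is if anything more explicit than the paper's. The one wrinkle you rightly flag --- that the sign convention relating $\eta_\gamma$ in the corollary's $\eta = (1,\eta_\gamma)$ to the expression $\gammapost = \betapost - \eta_\gamma'\betapre$ must be read so the identity holds --- is an ambiguity in the paper's notation, not a gap in your argument.
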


%The estimate $\hat{\beta}_{post}^{TN}$ overcomes the bias, relative to the population difference-in-difference coefficient $\betapost$, created by conditioning on observing no significant pre-trends when the parallel trends assumption does not hold. Note, however, that these do not overcome the fundamental issue that if the parallel pre-trends assumption does not hold, the coefficient $\betapost$ may not equal the treatment effect of the intervention of interest.

%The set of values satisfying this criterion is convex (see Lemma A.1 in Lee et al (2016)), and the CI endpoints can thus be calculated via bisection.

\section{Simulations\label{sec:Simulations}}

\subsection{Simulation DGP}
I conduct simulations in which we observe a control group and a treatment group for $K+1$ periods prior to treatment assignment and 1 period after treatment assignment. In the first specification, $y_{it} = \epsilon_{it}$ for $\epsilon \overset{iid}{\sim} N\left(0,\sigma^2\right)$, regardless of treatment status, so the parallel-trends assumption holds and there is no treatment effect. In the second specification, $y_{it} = \delta \times t \times Treatment_i + \epsilon_{it}$ for $\delta>0$, where again $\epsilon \overset{iid}{\sim} N\left(0,\sigma^2\right)$, so there is a linear upward pre-trend that continues into the post period. By construction, Assumption \ref{Assumption: covariance matrix structure} holds.

For each of 1 million simulated datasets, I estimate  regression (\ref{eq: regression specification}). Below, I report the distribution of $\betahatpost$ and its confidence intervals conditional on the $K$ pre-trends coefficients being statistically insignificant for $K = 0,\ldots,8$ (with $K=0$ denoting the unconditional case). I do the same for the modified estimator $\betatildepost$, as well as the median unbiased estimators and adjusted confidence intervals.

\subsection{Simulation results when the parallel trends assumption holds}

The left panel of Table \ref{tbl:betahat and betatilde under parallel trends} shows the distribution of $\betahatpost$ when the parallel trends assumption is true. The first row $(K=0)$ shows that without conditioning on the pre-trends, the average standard error matches the actual standard deviation of the estimator, and the associated confidence intervals achieve nominal coverage. Looking at the other rows of the table, however, we see that when we condition on insignificant pre-trends the average standard error is below the actual standard deviation of the estimator, and the null rejection probability (size) falls markedly below 0.05. This phenomenon becomes more acute as the number of pre-periods we condition on grows. In all cases, however, $\betahatpost$ is unbiased.

Turning to the right panel, we see that $\betatildepost$ is also unbiased, has lower variability than $\betahatpost$, and the estimated standard errors match the actual standard deviation of the estimator. In the $K=1$ case, these represent a 13\% reduction in estimated standard errors, and an 11\% reduction in the true variability of the estimator; in the $K=8$ case, these numbers rise to 26\% and 15\%. 

\begin{table}[!hbtp]
\caption{Performance of $\betahatpost$ and $\betatildepost$ when the parallel trends assumption is true and we condition on accepting pre-trends for $K$ periods\label{tbl:betahat and betatilde under parallel trends}}
\hspace{.5cm}
\begin{tabular}{@{\extracolsep{2pt}} L{.4cm} L{1.2cm}  L{1.2cm}  L{1.2cm} L{1.2cm} L{1.2cm}  L{1.2cm}  L{1.2cm} L{1.2cm}   }
     & \multicolumn{4}{c}{$\betahatpost$} & 
     \multicolumn{4}{c}{$\betatildepost$}\\ \cline{2-5} \cline{6-9} 
    K & 
    Bias & Mean S.E. & Actual S.D. & Size &
    Bias & Mean S.E. & Actual S.D. & Size \\
    % latex table generated in R 3.3.2 by xtable 1.8-2 package
% Thu Mar 29 11:54:23 2018
    0 & -0.000 & 0.127 & 0.127 & 0.050 & - & - & - & - \\ 
     1 & -0.000 & 0.127 & 0.123 & 0.043 & -0.000 & 0.110 & 0.110 & 0.050 \\ 
     2 & -0.000 & 0.127 & 0.120 & 0.039 & -0.000 & 0.103 & 0.103 & 0.050 \\ 
     3 & -0.000 & 0.127 & 0.118 & 0.035 & -0.000 & 0.100 & 0.100 & 0.050 \\ 
     4 & -0.000 & 0.127 & 0.116 & 0.032 & -0.000 & 0.098 & 0.098 & 0.050 \\ 
     5 & -0.000 & 0.127 & 0.115 & 0.030 & -0.000 & 0.097 & 0.097 & 0.050 \\ 
     6 & -0.000 & 0.127 & 0.113 & 0.028 & -0.000 & 0.096 & 0.096 & 0.050 \\ 
     7 & 0.000 & 0.127 & 0.112 & 0.027 & -0.000 & 0.095 & 0.095 & 0.050 \\ 
     8 & 0.000 & 0.127 & 0.111 & 0.026 & -0.000 & 0.094 & 0.094 & 0.050 \\

\end{tabular}
\end{table}

\subsection{Simulation results when there is actually an upward linear trend}

Table \ref{tbl:betahat and betatilde under linear pre-trend} shows the performance of $\betahatpost$ and $\betatildepost$ when the true model has an upward pre-trend. I set $\delta = 0.065$, so that the slope of the pre-trend is roughly half of the unconditional standard error for $\betahatpost$. 

Looking at the first row, we see that unconditionally $\betahatpost$ has a mean of $\betapost = 0.065$, i.e. precisely the slope of the pre-trend -- this is the usual ``omitted variable bias'' intuition that most people have in mind when evaluating the potential bias caused by non-parallel trends. However, as we move down the table we see that this bias becomes substantially larger once we condition on none of the pre-trends coefficients being statistically significant. Once we condition on 4 insignificant pre-period coefficients, we see that the conditional mean is more than twice that of the unconditional. 

As before, however, the standard error of $\betahatpost$ conditional on accepting the pre-trends is less than the mean estimated standard error. As a result, for $K$ small, when the bias is not too large, the traditional confidence intervals cover $\betapost$ more than 95\% of the time. However, for larger $K$ the bias dominates, and the CIs cover $\betapost$ less than 95\% of the time. In all specifications, the rejection probability at 0 exceeds 0.05 owing to the underlying trend, but the over-rejection problem becomes more extreme as we condition on more insignificant pre-period coefficients.

In the right panel, we see that while $\betatildepost$ outperformed $\betahatpost$ when the parallel trends assumption was true, its performance is even worse under the linear pre-trend. The bias relative to $\betapost$ is even larger than for $\betahatpost$, and the probability of rejecting $\betapost$ exceeds 0.05 in all cases.

%YOU MAY WANT TO ADD SOMETHING HERE ABOUT HOW CONDITIONING MAKES IT LOOK MORE LIKE A CLEAN BREAK THAN AN UPWARD TREND.

%ALSO MAY WANT TO MENTION THAT REJECTION PROBABILITY OF 0 WILL BE EVEN HIGHER IN LATER POST-PERIODS

\begin{table}[!hbt]
\caption{Performance of $\betahatpost$ and $\betatildepost$ when there is a linear upward pre-trend (slope = 0.065) but we do not reject parallel pre-trends for $K$ periods\label{tbl:betahat and betatilde under linear pre-trend}}
\hspace{.5cm}
\begin{tabular}{@{\extracolsep{2pt}} L{.4cm} L{1.1cm} L{1.1cm}  L{1.1cm}  L{1.1cm} L{1.1cm} L{1.1cm} L{1.1cm}  L{1.1cm}  L{1.1cm} L{1.1cm} L{1.1cm}   }
    & & \multicolumn{5}{c}{$\betahatpost$} & 
     \multicolumn{5}{c}{$\betatildepost$}\\ \cline{3-7} \cline{8-12}  
    K & Accept Pretrends &
    Mean $\betahatpost$ & Mean S.E. & Actual S.D. & Reject $\betapost$ & Reject 0 &
    Mean $\betatildepost$ & Mean S.E. & Actual S.D. & Reject $\betapost$ & Reject 0 \\[.2cm] \hline
   % latex table generated in R 3.3.2 by xtable 1.8-2 package
% Thu Mar 29 11:55:44 2018
    0 & 1.000 & 0.065 & 0.127 & 0.127 & 0.050 & 0.081 & - & - & - & - & - \\ 
     1 & 0.920 & 0.073 & 0.127 & 0.122 & 0.043 & 0.081 & 0.097 & 0.110 & 0.110 & 0.060 & 0.144 \\ 
     2 & 0.780 & 0.088 & 0.127 & 0.118 & 0.039 & 0.090 & 0.130 & 0.103 & 0.103 & 0.096 & 0.242 \\ 
     3 & 0.578 & 0.109 & 0.127 & 0.114 & 0.042 & 0.112 & 0.162 & 0.100 & 0.100 & 0.164 & 0.367 \\ 
     4 & 0.352 & 0.136 & 0.127 & 0.110 & 0.057 & 0.154 & 0.195 & 0.098 & 0.098 & 0.262 & 0.510 \\ 
     5 & 0.168 & 0.167 & 0.127 & 0.107 & 0.089 & 0.224 & 0.227 & 0.097 & 0.097 & 0.388 & 0.652 \\ 
     6 & 0.059 & 0.201 & 0.127 & 0.105 & 0.143 & 0.327 & 0.260 & 0.096 & 0.095 & 0.528 & 0.774 \\ 
     7 & 0.015 & 0.238 & 0.127 & 0.102 & 0.230 & 0.454 & 0.292 & 0.095 & 0.094 & 0.670 & 0.870 \\ 
     8 & 0.003 & 0.278 & 0.127 & 0.100 & 0.367 & 0.608 & 0.327 & 0.094 & 0.093 & 0.799 & 0.940 \\

   \hline
\end{tabular}
\end{table}

%ADD ACCEPT PROBABILITY TO TABLES
%Think about splitting this into a table a) and b), where one has betahat post and the other has beta-tilde post

\subsection{Performance of adjusted estimates and confidence intervals}

Tables \ref{tbl: median estimates for median-unbiased and betahatpost} and \ref{tbl: TN CIs coverage and width} show the performance of the median-unbiased estimates and truncated normal confidence intervals developed in Section \ref{subsec: TN confidence intervals} relative to the traditional point estimates and confidence intervals. The comparisons are shown for both the specification without any pre-trends and with the upward linear trend. In Table \ref{tbl: median estimates for median-unbiased and betahatpost}, we see that regardless of how many pre-trends coefficients we've conditioned on, $\betahatpostTN$ has a median very close to the true $\betapost$, i.e. 0 in the case without pre-trends and 0.065 in the case with a linear trend. Likewise, $\gammahatpostTN$ has a median close to 0 (i.e. $\gamma$) in all cases. Table \ref{tbl: TN CIs coverage and width} shows that $\CITNbeta$ and $\CITNgamma$ achieve (roughly) nominal rejection probabilities for $\betapost$ and $0$ respectively.

\begin{table}[!hbt]
\caption{Traditional and Median-Unbiased Point Estimates for $\betapost$ \label{tbl: median estimates for median-unbiased and betahatpost}}
\hspace{.5cm}
\begin{tabular}{@{\extracolsep{2pt}} L{.4cm} L{1.1cm} L{1.1cm}  L{1.1cm} L{1.1cm} L{1.1cm} L{1.1cm} }
 & \multicolumn{6}{c}{Median Estimate} \\\cline{2-7}
    & \multicolumn{3}{c}{No Pre-trend} & 
     \multicolumn{3}{c}{Linear Pre-trend}\\ \cline{2-4} \cline{5-7}\\[-.2cm]
   K  & $\betahatpost$ & $\betahatpost^{TN}$ & $\gammahatpost^{TN}$ & $\betahatpost$ & $\betahatpost^{TN}$ & $\gammahatpost^{TN}$ \\[.1cm] \hline 
%      \cline{2-3} \cline{4-5} \cline{6-7} \cline{8-9}
 % latex table generated in R 3.3.2 by xtable 1.8-2 package
% Thu Mar 29 11:56:07 2018
    1 & 0.000 & 0.000 & -0.000 & 0.073 & 0.065 & -0.001 \\ 
     2 & 0.000 & 0.000 & 0.000 & 0.087 & 0.065 & 0.000 \\ 
     3 & 0.000 & 0.000 & 0.000 & 0.108 & 0.065 & 0.000 \\ 
     4 & 0.000 & 0.000 & -0.000 & 0.135 & 0.065 & -0.000 \\ 
     5 & 0.000 & 0.000 & -0.000 & 0.166 & 0.064 & -0.002 \\ 
     6 & 0.000 & 0.000 & 0.000 & 0.201 & 0.062 & -0.002 \\ 
     7 & 0.000 & 0.000 & 0.000 & 0.237 & 0.063 & 0.002 \\ 
     8 & 0.000 & 0.000 & 0.000 & 0.276 & 0.072 & 0.007 \\

   \hline
\end{tabular}
\end{table}

\begin{table}[!hbt]
\caption{Comparison of Traditional and Truncated Normal Confidence Intervals\label{tbl: TN CIs coverage and width}}
\hspace{.5cm}

\subfloat[Comparison of Traditional CIs versus $\CITNbeta$]{\begin{tabular}{@{\extracolsep{2pt}} L{.4cm} L{1.1cm} L{1.1cm}  L{1.1cm}  L{1.1cm}  L{1.1cm}  L{1.1cm}  L{1.1cm} L{1.1cm}    }
    & \multicolumn{4}{c}{No Pre-trend} & 
     \multicolumn{4}{c}{Linear Pre-trend}\\ \cline{2-5} \cline{6-9}\\
      & \multicolumn{2}{c}{Rejects $\betapost$} & \multicolumn{2}{c}{Median Width} &
      \multicolumn{2}{c}{Rejects $\betapost$} & \multicolumn{2}{c}{Median Width} \\[.2cm] \cline{2-3} \cline{4-5} \cline{6-7} \cline{8-9} 
      K&
      $\CITrad$ & $\CITNbeta$ & $\CITrad$ & $\CITNbeta$  & $\CITrad$ & $\CITNbeta$ & $\CITrad$ & $\CITNbeta$ \\[.2cm]
     % latex table generated in R 3.3.2 by xtable 1.8-2 package
% Thu Mar 29 11:55:53 2018
    1 & 0.043 & 0.050 & 0.496 & 0.517 & 0.043 & 0.050 & 0.496 & 0.521 \\ 
     2 & 0.039 & 0.050 & 0.496 & 0.549 & 0.039 & 0.050 & 0.496 & 0.580 \\ 
     3 & 0.035 & 0.050 & 0.496 & 0.582 & 0.042 & 0.051 & 0.496 & 0.678 \\ 
     4 & 0.032 & 0.050 & 0.496 & 0.613 & 0.057 & 0.051 & 0.496 & 0.820 \\ 
     5 & 0.030 & 0.050 & 0.496 & 0.642 & 0.089 & 0.052 & 0.496 & 1.018 \\ 
     6 & 0.028 & 0.050 & 0.496 & 0.670 & 0.143 & 0.055 & 0.496 & 1.262 \\ 
     7 & 0.027 & 0.050 & 0.496 & 0.696 & 0.230 & 0.054 & 0.496 & 1.459 \\ 
     8 & 0.026 & 0.050 & 0.496 & 0.721 & 0.367 & 0.057 & 0.496 & 1.974 \\

   \hline
\end{tabular}}\\
\subfloat[Comparison of Traditional CIs versus $\CITNgamma$]{\begin{tabular}{@{\extracolsep{2pt}} L{.4cm} L{1.1cm} L{1.1cm}  L{1.1cm}  L{1.1cm}  L{1.1cm}  L{1.1cm}  L{1.1cm} L{1.1cm}    }
    & \multicolumn{4}{c}{No Pre-trend} & 
     \multicolumn{4}{c}{Linear Pre-trend}\\ \cline{2-5} \cline{6-9}\\
      & \multicolumn{2}{c}{Rejects 0} & \multicolumn{2}{c}{Median Width} &
      \multicolumn{2}{c}{Rejects 0} & \multicolumn{2}{c}{Median Width} \\[.2cm] \cline{2-3} \cline{4-5} \cline{6-7} \cline{8-9} 
      K&
       $\CITrad$ & $\CITNgamma$ & $\CITrad$ & $\CITNgamma$  & $\CITrad$ & $\CITNgamma$ & $\CITrad$ & $\CITNgamma$ \\[.2cm]
     % latex table generated in R 3.3.2 by xtable 1.8-2 package
% Thu Mar 29 11:55:53 2018
    1 & 0.043 & 0.050 & 0.496 & 1.062 & 0.081 & 0.050 & 0.496 & 1.081 \\ 
     2 & 0.039 & 0.050 & 0.496 & 0.773 & 0.090 & 0.050 & 0.496 & 0.839 \\ 
     3 & 0.035 & 0.050 & 0.496 & 0.651 & 0.112 & 0.050 & 0.496 & 0.784 \\ 
     4 & 0.032 & 0.050 & 0.496 & 0.581 & 0.154 & 0.050 & 0.496 & 0.796 \\ 
     5 & 0.030 & 0.050 & 0.496 & 0.536 & 0.224 & 0.050 & 0.496 & 0.843 \\ 
     6 & 0.028 & 0.050 & 0.496 & 0.504 & 0.327 & 0.051 & 0.496 & 0.906 \\ 
     7 & 0.027 & 0.050 & 0.496 & 0.481 & 0.454 & 0.049 & 0.496 & 0.975 \\ 
     8 & 0.026 & 0.050 & 0.496 & 0.463 & 0.608 & 0.048 & 0.496 & 1.053 \\

   \hline
\end{tabular}}

\end{table}

However, the improved properties of the $\CITNbeta$ and $\CITNgamma$ come at a power cost in many of the specifications, as evidenced by the median widths of the confidence intervals shown in Table \ref{tbl: TN CIs coverage and width}. %\footnote{I display medians, rather than means, because I am currently working through some computational issues in a small number of cases where the upper (or lower bound) of the TN CI is many (untruncated) standard standard errors above the estimated coefficient. In these cases, I currently set the bound to $\infty$ (or -$\infty$), which should have a small effect on the median but makes the mean undefined.} 
In all specifications, the median widths for the traditional CIs are smaller than those for the $\CITNbeta$ intervals. The differences between the standard CIs and the $\CITNbeta$ intervals are fairly small when we condition on few periods, but increase as the number of conditioning periods increases. By contrast, the gap between the median widths of the traditional CIs and the $\CITNgamma$ intervals is large for small $K$, but decreases as $K$ grows. For $K$ above 6, the $\CITNgamma$ median widths are actually smaller than the traditional ones when the parallel pre-trends assumption holds. Interestingly, the median widths for both the $\CITNbeta$ and $\CITNgamma$ are much larger in the case with the linear pre-trend, so the simulations suggest that the penalty for using the adjusted estimates is smaller when the parallel pre-trends assumption actually holds. 

%I do not know off-hand why this is the case, but if this is a general property, it seems to actually be a desirable feature of the procedure (since it lowers estimated precision in the case where the design is false).

% \section{Extensions}

% \begin{itemize}
%     \item
%     Clean up and extend various aspects of the current theory/simulation

%         \begin{itemize}
            
%             \item
%             Allow for estimation of SEs in the theory
            
%             \item
%             Accommodate individual FE and clustered SEs
            
%             \item
%             Other methods for CIs, including uniformly most powerful unbiased tests 
            
%             \item
%             Dealing with multiple hypothesis testing

%         \end{itemize}
%     \item
%     Flesh out relationship/interaction with specification search
    
%     \item
%     Understand implications for other methods like synthetic controls and matching, that effectively search for parallel trends
    
%     \item
%     Show how much these adjustments matter in actual studies. Relatedly, what is the distribution of p-values for the pre-trends in published studies?
% \end{itemize}

% % - Deal with multiple hypothesis testing

% % - Real-world examples 

% % - Can we come up with some measure of power against alternatives? E.g. among linear slopes, what is the max likelihood we would both get this coefficient and not reject pre-trends

% % - Synthetic controls and other matching procedures

% % - Abadie

%\FloatBarrier
\section{Conclusion}

%This paper illustrates the importance of accounting for the test for observably parallel pre-trends that has become common practice in applied work.  ... An interesting question for future research is the extent to which applying these corrections changes the substantive conclusion of applied economics results. 

This paper illustrates that many of the properties we expect of the traditional OLS difference-in-difference estimator do not hold conditional on having passed the test for pre-trends that is standard in empirical work. I describe two approaches to address this issue. The first provides unbiased estimates and improved precision when the parallel trends assumption is correct (both conditional on the pre-test and unconditionally), but can have undesirable properties if the parallel trends assumption is false and we fail to reject it in our pre-test. I recommend using this estimator only in cases where the researcher has strong ex ante reasons to believe in the parallel trends assumption and the test of parallel trends serves only as a sanity check. The second approach, which I recommend in most cases, provides corrected point estimates and confidence intervals that condition on having not observed a significant pre-trend, thereby recovering many of the desirable properties of OLS even when the parallel trends assumption is false but we fail to reject it in our pre-test. I hope that these corrections will be useful for the applied researcher. I caution, however, that these corrections are not a panacea; a researcher who accepts a research design with a non-linear underlying trend will generally not recover an unbiased estimate of the treatment effect of interest even when applying these corrections. Therefore, in addition to applying these corrections, I also caution researchers not to exclusively rely on the test for pre-trends as validation for their research design, and to consider not just the significance of their tests for pre-trends but also the power of these tests to rule out meaningful violations of parallel trends.  

\FloatBarrier

\bibliography{Bibliography.bib}

\appendix
\section{Proofs}

\begin{lem}\label{lem:independence of betatilde and betapre}
Let $\tilde{\beta}_{post} = \betahat_{post} - \Sigma_{12} \Sigma_{22}^{-1} \betahat_{pre} $. Then $\betatildepost$ and $\betahatpre$ are independent.
\end{lem}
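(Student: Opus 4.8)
The plan is to exploit joint normality and the fact that for jointly Gaussian vectors zero covariance is equivalent to independence. First I would note that by the maintained distributional assumption, $(\betahat_{post}, \betahat_{pre})$ is jointly normal with covariance matrix $\Sigma = \twobytwomat{\Sigma_{11}}{\Sigma_{12}}{\Sigma_{21}}{\Sigma_{22}}$. Since $\betatildepost = \betahat_{post} - \Sigma_{12}\Sigma_{22}^{-1}\betahat_{pre}$ is a fixed linear function of this vector, the stacked vector $(\betatildepost, \betahat_{pre})$ is itself jointly normal.

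Next I would compute the cross-covariance directly, using bilinearity of covariance:
\begin{align*}
\Cov\left(\betatildepost, \betahat_{pre}\right) &= \Cov\left(\betahat_{post} - \Sigma_{12}\Sigma_{22}^{-1}\betahat_{pre},\, \betahat_{pre}\right) \\
&= \Cov\left(\betahat_{post}, \betahat_{pre}\right) - \Sigma_{12}\Sigma_{22}^{-1}\Var\left(\betahat_{pre}\right) \\
&= \Sigma_{12} - \Sigma_{12}\Sigma_{22}^{-1}\Sigma_{22} = 0.
\end{align*}
Finally, I would invoke the standard fact that uncorrelated jointly normal random vectors are independent, which gives the claim.

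There is essentially no obstacle here — the only things to be careful about are (i) making explicit that the whole stacked vector $(\betatildepost, \betahat_{pre})$, not just each coordinate marginally, is jointly normal (so that the zero-covariance-implies-independence step is legitimate), and (ii) that $\Sigma_{22}$ is invertible so that $\betatildepost$ is well-defined, which is implicit in the statement. The construction is just the population analogue of partialling $\betahat_{pre}$ out of $\betahat_{post}$, so the residual is orthogonal to (hence independent of) $\betahat_{pre}$ by design.
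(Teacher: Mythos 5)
Your proposal is correct and follows essentially the same route as the paper: establish joint normality of $(\betatildepost, \betahatpre)$ as a linear transformation of the jointly normal vector $(\betahatpost, \betahatpre)$, compute $\Cov(\betatildepost, \betahatpre) = \Sigma_{12} - \Sigma_{12}\Sigma_{22}^{-1}\Sigma_{22} = 0$, and conclude independence from zero correlation under joint normality. The two cautionary remarks you add (joint normality of the full stacked vector and invertibility of $\Sigma_{22}$) are sensible but do not change the argument.
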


\begin{proof}
Note that by assumption, $\betahatpost$ and $\betahatpre$ are jointly normal. Since $\betatildepost$ is a linear combination of $\betahatpost$ and $\betahatpre$, it follows that $\betahatpre$ and $\betatildepost$ are jointly normal. It thus suffices to show that $\betahatpre$ and $\betatildepost$ are uncorrelated. We have:
\begin{align*}
\Cov \left( \betahat_{pre} \,,\, \tilde{\beta}_{post} \right) &= 
\expe{ 
\left(\betahat_{pre} - \beta_{pre} \right)
\left( (\betahat_{post} - \beta_{post} ) -  \Sigma_{12} \Sigma_{22}^{-1} (\betahat_{pre} - \beta_{pre}) \right)   } \\
&= \Sigma_{12} - \Sigma_{12} \Sigma_{22}^{-1} \Sigma_{22} \\
&= 0
\end{align*}

\end{proof}

We now prove Proposition \ref{prop: bias when parallel trends is false - general} for the case of generic $\betapre$, and then use these results to prove Proposition \ref{prop: performance under parallel pre-trends} for the special case where $\betapre = 0$.

\paragraph{Proof of Proposition \ref{prop: bias when parallel trends is false - general}}
\begin{enumerate}
    \item
    Note that by construction, $\betahatpost = \betatildepost + \Sigma_{12} \Sigma_{22}^{-1} \betahatpre$. It follows that

\begin{align*}
\expe{\betahatpost \,|\, \betahatpre \in B} &= \expe{\betatildepost \,|\, \betahatpre \in B} + \Sigma_{12} \Sigma_{22}^{-1} \expe{ \betahatpre \,|\, \betahatpre \in B} \\
&= \expe{\betatildepost } + \Sigma_{12} \Sigma_{22}^{-1} \expe{\betahatpre \,|\, \betahatpre \in B} \\
&= \expe{\betahatpost - \Sigma_{12} \Sigma_{22}^{-1} \betahatpre } + \Sigma_{12} \Sigma_{22}^{-1} \expe{\betahatpre \,|\, \betahatpre \in B} \\
&= \betapost - \Sigma_{12} \Sigma_{22}^{-1} \betapre  + \Sigma_{12} \Sigma_{22}^{-1} \expe{\betahatpre \,|\, \betahatpre \in B}\\
&= \betapost + \Sigma_{12} \Sigma_{22}^{-1} \left( \expe{\betahatpre \,|\, \betahatpre \in B} - \betapre \right)
\end{align*}

where the second line uses the independence of $\betatildepost$ and $\betahatpre$ from Lemma \ref{lem:independence of betatilde and betapre}, and the third and fourth use the definition of $\betatildepost$, $\betapost$, and $\betapre$.
    
    \item
    By Lemma \ref{lem:independence of betatilde and betapre}, $\betatildepost$ is independent of $\betahatpre$, so $\expe{\betatildepost \,|\, \betahatpre \in B} = \expe{\betatildepost}$ for any $B$. Further,
    
    $$\expe{\betatildepost} = \expe{\betahatpost - \Sigma_{12} \Sigma_{22}^{-1} \betahatpre } =  \betapost -  \Sigma_{12} \Sigma_{22}^{-1} \betapre .$$
    
    \item
    Note that since $\betahatpost = \betatildepost + \Sigma_{12}\Sigma_{22}^{-1} \betahatpre$, for any set $B$,
\begin{align}
\var{\betahatpost \,|\, \betahatpre \in B} &= \var{\betatildepost \,|\, \betahatpre \in B} + \var{\Sigma_{12} \Sigma_{22}^{-1} \betahatpre \,|\, \betahatpre \in B} \nonumber \\ & +  2 \Cov\left( \betatildepost \,,\,  \Sigma_{12} \Sigma_{22}^{-1} \betahatpre \,|\, \betahatpre \in B \right) \nonumber \\
&= \var{\betatildepost } + \var{ \Sigma_{12} \Sigma_{22}^{-1} \betahatpre \,|\, \betahatpre \in B} \label{eq: conditional variance formula for arbitrary B}
\end{align}

where we use the independence of $\betatildepost$ and $\betahatpre$ from Lemma \ref{lem:independence of betatilde and betapre} to obtain that $\var{\betatildepost \,|\, \betahatpre \in B} = \var{\betatildepost}$ and that the covariance term equals 0. It follows that
\begin{align*}
    \var{\betahat_{post} | \betahat_{pre}  \in B_{NS} } -  \var{\betahat_{post}} &= 
    \var{ \Sigma_{12} \Sigma_{22}^{-1} \betahatpre \,|\, \betahatpre \in B_{NS}} - \var{ \Sigma_{12} \Sigma_{22}^{-1} \betahatpre} \\ 
    &= (\Sigma_{12} \Sigma_{22}^{-1}) \left( \var{  \betahatpre \,|\, \betahatpre \in B_{NS}} - \var{\betahatpre} \right) (\Sigma_{12} \Sigma_{22}^{-1})'
\end{align*}

which is (weakly) negative since $\var{  \betahatpre \,|\, \betahatpre \in B_{NS}} - \var{\betahatpre}$ is negative semi-definite by Lemma \ref{lem: definiteness of jacobian of truncated expectation} below. $\Box$
    %% See here: https://math.stackexchange.com/questions/445164/is-the-mean-of-the-truncated-normal-distribution-monotone-in-mu
    
    % For more rigorous proof
    
\end{enumerate}

\begin{defn}[Symmetric Rectangular Truncation About 0]
We say that $B \subset \reals^K$ is a symmetric rectangular truncation around 0 if $B = \{ y \in \reals^K \,|\, - b_j \leq y_j \leq b_j, \text{ for } j = 1,\ldots,K \}$ for some non-negative constants $b_1,\ldots,b_K$.

\end{defn}
\begin{lem}\label{lem: mean of symmetric truncation equals 0}
Suppose $Y \sim \normnot{0}{\Sigma}$ is a K-dimensional multivariate normal, and $B$ is a symmetric rectangular truncation around 0. Then $\expe{Y \,|\, Y \in B} = 0$.
\end{lem}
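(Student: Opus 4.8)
The plan is to exploit the symmetry of both the Gaussian law and the truncation set about the origin. The key observation is that $Y \sim \normnot{0}{\Sigma}$ has a distribution invariant under sign reversal: since the characteristic function $t \mapsto \exp(-\tfrac{1}{2} t' \Sigma t)$ is an even function of $t$, we have $-Y \overset{d}{=} Y$ (this holds whether or not $\Sigma$ is invertible; when $\Sigma$ is nonsingular it is equivalently the statement that the density $\propto \exp(-\tfrac{1}{2} y' \Sigma^{-1} y)$ is an even function of $y$). Separately, a symmetric rectangular truncation $B = \{ y : -b_j \le y_j \le b_j,\ j=1,\ldots,K \}$ satisfies $-B = B$ by construction.

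First I would record that, since $B$ has positive probability (which I take as an implicit hypothesis, and which holds in the application where $b_j = c_\alpha \sqrt{\Sigma_{jj}} > 0$), the conditional expectation $\expe{Y \mid Y \in B}$ is well-defined; moreover it is finite because $B$ is bounded, so $\|Y\| \le \max_j b_j$ on the conditioning event. Next I would combine the two symmetries: letting $Z := -Y$, the fact that $Z \overset{d}{=} Y$ gives $\expe{g(Z)\,\Ind(Z \in B)} = \expe{g(Y)\,\Ind(Y \in B)}$ for every bounded measurable $g$, and the fact that $\{Z \in B\} = \{Y \in -B\} = \{Y \in B\}$ lets me rewrite the left-hand side as $\expe{g(-Y)\,\Ind(Y \in B)}$. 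Dividing through by $\prob{Y \in B}$ and taking $g$ to be (a truncation of) the coordinate identity map then yields $\expe{-Y \mid Y \in B} = \expe{Y \mid Y \in B}$, i.e. $\expe{Y \mid Y \in B} = 0$.

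There is essentially no substantive obstacle here; the only points requiring a little care are (i) phrasing the sign-reversal symmetry of $\normnot{0}{\Sigma}$ so that it does not tacitly assume $\Sigma$ is invertible, and (ii) being explicit that $\prob{Y \in B} > 0$ so that the conditioning is meaningful. Both are dispatched by the remarks above, so the argument is short.
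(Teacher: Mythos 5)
Your proof is correct and follows exactly the same route as the paper's: the sign-reversal symmetry of $\normnot{0}{\Sigma}$ combined with the invariance $-B = B$ gives $\expe{Y \mid Y \in B} = -\expe{Y \mid Y \in B}$, hence zero. The extra care you take about invertibility of $\Sigma$ and positivity of $\prob{Y \in B}$ is sensible housekeeping but does not change the argument, which the paper states more tersely.
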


\begin{proof}
Note that if $Y \sim \normnot{0}{\Sigma}$, then we also have $-Y \sim \normnot{0}{\Sigma}$. It follows that $\expe{Y \,|\, Y \in B} = \expe{-Y \,|\, (-Y) \in B}$. However, $- b_j \leq y_j \leq b_j$ iff $- b_j \leq -y_j \leq b_j$, so $(-Y) \in B$ iff $Y \in B$. It follows that 

\begin{align*}
    \expe{Y \,|\, Y \in B} &= \expe{-Y \,|\, (-Y) \in B} \\
    &= \expe{-Y \,|\, Y \in B} \\
    &= - \expe{Y \,|\, Y \in B}
\end{align*}

\noindent which implies that $\expe{Y \,|\, Y \in B} = 0$.

\end{proof}

\paragraph{Proof of Proposition \ref{prop: performance under parallel pre-trends}}

\begin{enumerate}
    \item
    From Proposition \ref{prop: bias when parallel trends is false - general}, Part \ref{subprop: betahat unbiased}, it suffices to show that $\expe{\betahatpre \,|\, \betahatpre \in B_{NS}} - \betapre$ equals 0. From the parallel pre-trends assumption, we have that $\betapre = 0$, so this is equivalent to $\expe{\betahatpre \,|\, \betahatpre \in B_{NS}} = 0$. Additionally, $\betahatpre$ is normally distributed with mean $\betapre = 0 $. Since $B_{NS}$ is a symmetric rectangular truncation around 0, it follows that $\expe{\betahatpre \,|\, \betahatpre \in B_{NS}} = 0$ by Lemma \ref{lem: mean of symmetric truncation equals 0}.
    
    \item
    This is a restatement of Proposition \ref{prop: bias when parallel trends is false - general}, Part \ref{subprop: betahat has lowe variance when condition on BNS - general} (for the special case where $\betapre = 0$).

    \item
    By Lemma \ref{lem:independence of betatilde and betapre}, $\betatildepost$ is independent of $\betahatpre$, so $\expe{\betatildepost \,|\, \betahatpre \in B} = \expe{\betatildepost}$ for any $B$. And 
    
    $$\expe{\betatildepost} = \expe{\betahatpost - \Sigma_{12} \Sigma_{22}^{-1} \betahatpre } =  \betapost -  \Sigma_{12} \Sigma_{22}^{-1} \betapre =  \betapost$$
    
    since $\betapre = 0 $ under parallel pre-trends.
    
    \item
    Note that by independence, $\var{\betatildepost \,|\, \betahatpre \in B} = \var{\betatildepost}$. Further,
    
    \begin{align*}
        \var{\betatildepost} &= \var{\betahatpost - \Sigma_{12} \Sigma_{22}^{-1} \betahatpre } \\
        &= \var{\betahatpost} + \left( \Sigma_{12} \Sigma_{22}^{-1} \right) \var{\betahatpre} \left( \Sigma_{12} \Sigma_{22}^{-1} \right)' - 2 \Sigma_{12} \Sigma_{22}^{-1} \Cov\left(\betahatpost , \betahatpre \right)\\ &=
        \Sigma_{11} + \Sigma_{12} \Sigma_{22}^{-1} \Sigma_{21} - 2 \Sigma_{12} \Sigma_{22}^{-1} \Sigma_{21} \\ &=
        \Sigma_{11} - \Sigma_{12} \Sigma_{22}^{-1} \Sigma_{21}
    \end{align*}
    
    To prove the desired inequality, note that from (\ref{eq: conditional variance formula for arbitrary B}), 
    \begin{align*}
        \var{\betahatpost \,|\, \betahatpre \in B} - \var{\betatildepost} =
        \var{\Sigma_{12} \Sigma_{22}^{-1} \betahatpre \,|\, \betahatpre \in B} 
     \end{align*}
\noindent which is weakly positive. $\Box$ 
\end{enumerate}

We now prove a series of Lemmas leading up to the proof of Proposition \ref{prop: relative magnitudes under lienar pre-trends}.

\begin{lem}\label{lem: definiteness of jacobian of truncated expectation}
Suppose $Y$ is a k-dimensional multivariate normal, $Y \sim \normnot{\mu}{\Sigma}$, and let $B \subset \reals^k$ be a convex set such that $\prob{Y \in B} > 0$. Letting $D_\mu$ denote the Jacobian operator with respect to $\mu$, we have

\begin{enumerate}
        \item
    $D_\mu \expe{Y \,|\, Y \in B, \mu } = \var{Y \,|\, Y \in B , \mu} \Sigma^{-1}$.
    
    \item
    $\var{Y \,|\, Y \in B} - \Sigma$ is negative semi-definite.
\end{enumerate}
\end{lem}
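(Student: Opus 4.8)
The plan is to prove Part 1 by differentiating the truncated-normal mean under the integral sign with the Gaussian score identity, and then to obtain Part 2 by recognizing $\var{Y\,|\,Y\in B,\mu}-\Sigma$ as a congruence transform of the Hessian of $\mu\mapsto\log P_\mu(Y\in B)$, which is concave because $B$ is convex.

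For Part 1, I would write $p(\mu):=P_\mu(Y\in B)=\int_B\phi_\Sigma(y-\mu)\,dy$, where $\phi_\Sigma$ is the $\normnot{0}{\Sigma}$ density, and $m(\mu):=\int_B y\,\phi_\Sigma(y-\mu)\,dy$, so that $\expe{Y\,|\,Y\in B,\mu}=m(\mu)/p(\mu)$. The workhorse identity is $\nabla_\mu\phi_\Sigma(y-\mu)=\Sigma^{-1}(y-\mu)\phi_\Sigma(y-\mu)$; since the Gaussian density and its $\mu$-gradient admit integrable envelopes locally uniformly in $\mu$, differentiation under the integral is justified by dominated convergence, giving $\nabla_\mu p(\mu)=\Sigma^{-1}\bigl(m(\mu)-\mu\,p(\mu)\bigr)$ and $D_\mu m(\mu)=\bigl(\int_B y(y-\mu)'\phi_\Sigma(y-\mu)\,dy\bigr)\Sigma^{-1}$. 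Applying the quotient rule to $m(\mu)/p(\mu)$ and collecting terms should yield
\begin{align*}
D_\mu\expe{Y\,|\,Y\in B,\mu}&=\Bigl(\expe{Y(Y-\mu)'\,|\,Y\in B,\mu}-\expe{Y\,|\,Y\in B,\mu}\,\expe{(Y-\mu)'\,|\,Y\in B,\mu}\Bigr)\Sigma^{-1}\\
&=\var{Y\,|\,Y\in B,\mu}\,\Sigma^{-1},
\end{align*}
where the last line uses that $\mu$ is deterministic, so $\Cov(Y,Y-\mu\,|\,\cdot)=\var{Y\,|\,\cdot}$.

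For Part 2, set $\psi(\mu):=\log p(\mu)$. Part 1 gives $\nabla_\mu\psi(\mu)=\Sigma^{-1}\bigl(\expe{Y\,|\,Y\in B,\mu}-\mu\bigr)$, and differentiating once more,
\begin{align*}
\nabla^2_\mu\psi(\mu)&=\Sigma^{-1}\bigl(\var{Y\,|\,Y\in B,\mu}\,\Sigma^{-1}-I\bigr)=\Sigma^{-1}\bigl(\var{Y\,|\,Y\in B,\mu}-\Sigma\bigr)\Sigma^{-1}.
\end{align*}
Because $\Sigma$ is positive definite, $\var{Y\,|\,Y\in B,\mu}-\Sigma$ is negative semi-definite if and only if $\nabla^2_\mu\psi(\mu)$ is, i.e.\ if and only if $\psi$ is concave. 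But $p(\mu)=\int\Ind_B(y)\,\phi_\Sigma(y-\mu)\,dy=(\Ind_B*\phi_\Sigma)(\mu)$ is the convolution of the indicator of the convex set $B$ (which is log-concave) with the log-concave Gaussian density, hence is log-concave in $\mu$ by Pr\'ekopa's theorem. Thus $\psi=\log p$ is concave, $\nabla^2_\mu\psi(\mu)\preceq 0$, and the claim follows.

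I expect the only substantive step to be the appeal to log-concavity of $\mu\mapsto P_\mu(Y\in B)$ in Part 2; the matrix calculus in Part 1 is routine but must be done carefully to keep $\Sigma^{-1}$ on the correct side and to track the order of the outer products. If invoking Pr\'ekopa's theorem is considered too heavy, an alternative route for Part 2 is to apply the Brascamp--Lieb variance inequality to the log-concave truncated-normal density with the linear functional $y\mapsto a'y$, which directly gives $\var{a'Y\,|\,Y\in B}\le a'\Sigma a$ for every $a$; I would nonetheless present the convolution argument as the cleaner one.
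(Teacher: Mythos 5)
Your proposal is correct and follows essentially the same route as the paper: Part 1 by differentiating the ratio $\int_B y\,\phi_\Sigma(y-\mu)\,dy / \int_B \phi_\Sigma(y-\mu)\,dy$ under the integral sign via the Gaussian score identity, and Part 2 by writing $\var{Y\,|\,Y\in B,\mu}-\Sigma$ as a congruence transform of the Hessian of $\log \P_\mu(Y\in B)$ and invoking Pr\'ekopa's theorem for log-concavity (the paper phrases it as marginalizing the log-concave product $\phi_\Sigma(y-\mu)\Ind[y\in B]$, which is the same convolution argument). The Brascamp--Lieb alternative you mention for Part 2 would be a genuinely different shortcut, but as presented the argument matches the paper's.
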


\textit{Proof.}\footnote{I am grateful to Alecos Papadopolous, whose \href{https://math.stackexchange.com/questions/445164/is-the-mean-of-the-truncated-normal-distribution-monotone-in-mu}{answer} on StackOverflow to a related question inspired this proof.}

Define the function $H: \reals^{k} \rightarrow \reals$ by
$$H(\mu) = \int_B \phi_\Sigma(y - \mu) dy $$

\noindent for $\phi_\Sigma(x) = det(2\pi \Sigma)^{-\frac{1}{2}} exp(-\frac{1}{2} x' \Sigma^{-1} x) $ the PDF of the $\normnot{0}{\Sigma}$ distribution. We now argue that $H$ is log-concave in $\mu$. Note that we can write $H(\mu) = \int_{\reals^k} g_1(y, \mu) g_2(y,\mu) dy$ for $g_1(y,\mu) = \phi_\Sigma(y- \mu)$ and $g_2(y,\mu) = 1\left[y \in B\right]$. The normal PDF is log-concave, and $g_1$ is the composition of the normal PDF with a linear function, and hence log-concave as well. Likewise, $g_2$ is log-concave since $B$ is a convex set. The product of log-concave functions is log-concave, and the marginalization of a log-concave function with respect to one of its arguments is log-concave by Prekopa's theorem (see, e.g. Theorem 3.3 in \citet{saumard_log-concavity_2014}), from which it follows that $H$ is log-concave in $\mu$.

Now, applying Leibniz's rule, we have that the $1 \times k$ gradient of $\log H$ with respect to $\mu$ is equal to

\begin{align*}
    D_\mu \log H &= \dfrac{\int_B D_\mu \phi_\Sigma(y-\mu) dy}{\int_B \phi_\Sigma(y-\mu) dy}\\
    &= \dfrac{\int_B \phi_\Sigma(y-\mu) (y - \mu)' \Sigma^{-1} dy}{\int_B \phi_\Sigma(y-\mu) dy}\\
    &= (\expe{Y \,|\, Y \in B} - \mu)' \Sigma^{-1}.
\end{align*}

\noindent where the second line uses the fact that $D_\mu \phi_{\Sigma}(y - \mu) = \phi_{\Sigma}(y - \mu) \cdot (y - \mu)' \Sigma^{-1}$ and the third uses the definition of the conditional expectation. It follows that

\begin{align*}
   \expe{Y \,|\, Y \in B, \mu} = \mu + \Sigma (D_\mu \log H)'. 
\end{align*}

Differentiating again with respect to $\mu$, we have that the $k \times k$ Jacobian of $\expe{Y \,|\, Y \in B, \mu}$ with respect to $\mu$ is given by

\begin{align}
  D_\mu \expe{Y \,|\, Y \in B, \mu} = I + \Sigma D_\mu (D_\mu \log H)'. \label{eq: jacobian of expectation in terms of hessian} 
\end{align}

\noindent Since $H$ is log-concave, $D_\mu (D_\mu \log H)'$ is the Hessian of a concave function, and thus is negative semi-definite. Next, note that by definition,

\begin{align*}
    \expe{Y \,|\, Y \in B, \mu} = \dfrac{\int_B y \, \phi_\Sigma (y - \mu) \, dy }{ \int_B \phi_\Sigma (y - \mu) \, dy }.
\end{align*}

\noindent Thus, by the product rule,

\begin{align}
    D_\mu \expe{Y \,|\, Y \in B, \mu} =& \dfrac{\int_B y \, D_\mu \phi_\Sigma (y - \mu) \, dy }{ \int_B \phi_\Sigma (y - \mu) \, dy } + \nonumber \\
    & \left[ \int_B y \, \phi_\Sigma (y - \mu) \, dy \right] \cdot
    D_\mu \left[ \int_B \phi_\Sigma (y - \mu) \, dy  \right]^{-1}. \label{eq: derivative of expectation, long form}
\end{align}

\noindent Recall that

$$D_\mu \phi_{\Sigma}(y - \mu) = \phi_{\Sigma}(y - \mu) \cdot (y - \mu)' \Sigma^{-1}. $$

\noindent The first term in (\ref{eq: derivative of expectation, long form}) thus becomes

\begin{align*}
     \dfrac{\int_B y (y - \mu)' \phi_\Sigma (y - \mu) \, dy }{ \int_B \phi_\Sigma (y - \mu) \, dy } \Sigma^{-1} =\\
    \left( \expe{Y Y' \,|\, Y \in B, \mu} - \expe{Y \,|\, Y \in B, \mu} \mu' \right) \Sigma^{-1}.
\end{align*}

Applying the chain-rule, the second term in (\ref{eq: derivative of expectation, long form}) becomes

    \begin{align*}
-\dfrac{\int_B y \, \phi_\Sigma (y - \mu) \, dy \cdot \int_B (y - \mu)' \, \phi_\Sigma (y - \mu) \, dy }{\left[ \int_B \phi_\Sigma (y - \mu) \, dy  \right]^{2}} \Sigma^{-1} =\\
\left( - \expe{Y \,|\, Y \in B, \mu } \expe{Y \,|\, Y \in B, \mu }' + \expe{Y \,|\, Y \in B, \mu } \mu' \right) \Sigma^{-1}.
%    \left[ \int_{a_1}^{b_1} \cdots \int_{a_k}^{b_k} y \, \phi_\Sigma (y - \mu) \, dy \right]  \left[ \int_{a_1}^{b_1} \cdots \int_{a_k}^{b_k} \phi_\Sigma (y - \mu) \, dy  \right]^{-2} \cdot
%    \left[ \int_{a_1}^{b_1} \cdots \int_{a_k}^{b_k} (y - \mu)' \, \phi_\Sigma (y - \mu) \, dy \right] \Sigma^{-1}
\end{align*}

Substituting back into (\ref{eq: derivative of expectation, long form}), we have

\begin{align}
D_\mu \expe{Y \,|\, Y \in B, \mu} &= \left( \expe{Y Y' \,|\, Y \in B, \mu } - \expe{Y \,|\, Y \in B, \mu} \expe{Y \,|\, Y \in B, \mu}' \right) \Sigma^{-1} \nonumber \\
&= \var{Y \,|\, Y \in B, \mu} \Sigma^{-1} \label{eq: jacobian of expectation equals truncated var}
\end{align}

\noindent which establishes the first result. Additionally, combining (\ref{eq: jacobian of expectation in terms of hessian}) and (\ref{eq: jacobian of expectation equals truncated var}), we have that

\begin{align}
\var{Y \,|\, Y \in B, \mu} \Sigma^{-1} = I + \Sigma D_\mu (D_\mu \log H)' \label{eq: variance of Y times Sigmainv in terms of hessian}
\end{align}

\noindent which implies that

\begin{align}
\var{Y \,|\, Y \in B, \mu} - \Sigma = \Sigma \,  D_\mu (D_\mu \log H)' \,  \Sigma. \label{eq: difference of variances}
\end{align}

\noindent Thus, for any vector $x \in \reals^k$, 

\begin{align*}
    x' \left( \var{Y \,|\, Y \in B, \mu} - \Sigma \right) x &= 
    x' \left( \Sigma \, D_\mu (D_\mu \log H)' \, \Sigma \right) x \\
    &= (\Sigma x)' \, \left(D_\mu (D_\mu \log H)' \right) \, (\Sigma x)
    \\&\leq 0 
\end{align*}

\noindent where the inequality follows from the fact that $D_\mu (D_\mu \log H)'$ is negative semi-definite. Since $\var{Y \,|\, Y \in B, \mu} - \Sigma$ is symmetric, it follows that it is negative semi-definite, as we desired to show. $\Box$

\begin{lem}\label{lem: Covariance structure assumption implies iota an eigenvector}

Suppose that $\Sigma$ satisfies Assumption \ref{Assumption: covariance matrix structure}. Then for $\iota$ the vector of ones and some $c_1>0$, $\iota' \Sigma^{-1} = c_1 \iota'$. Additionally, $\Sigma_{12} \Sigma_{22}^{-1} = c_2 \iota'$, for a constant $c_2 > 0$. 

\end{lem}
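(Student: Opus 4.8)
The plan is to recognize $\Sigma$ as a compound-symmetry (equicorrelation) matrix and exploit the fact that the all-ones vector is one of its eigenvectors. Writing $\iota$ for the $(K+1)$-vector of ones, Assumption \ref{Assumption: covariance matrix structure} says precisely that
\[
\Sigma = (\sigma^2 - \rho)\, I_{K+1} + \rho\, \iota \iota'.
\]
A direct computation gives $\Sigma \iota = (\sigma^2 - \rho)\iota + \rho (K+1)\iota = (\sigma^2 + K\rho)\iota$, so $\iota$ is an eigenvector of $\Sigma$ with eigenvalue $\lambda := \sigma^2 + K\rho$. Since $\sigma^2 > \rho > 0$ we have $\lambda > 0$; moreover every vector orthogonal to $\iota$ is an eigenvector with eigenvalue $\sigma^2 - \rho > 0$, so $\Sigma$ is positive definite and in particular invertible. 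From $\Sigma\iota = \lambda\iota$ we get $\Sigma^{-1}\iota = \lambda^{-1}\iota$, and transposing (using that $\Sigma^{-1}$ is symmetric) yields $\iota'\Sigma^{-1} = \lambda^{-1}\iota' = c_1 \iota'$ with $c_1 = 1/(\sigma^2 + K\rho) > 0$. This establishes the first claim.

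For the second claim, I would note that partitioning $\Sigma$ as in the statement gives $\Sigma_{12} = \rho\,\iota_K'$ (the $1\times K$ block whose entries are all the common off-diagonal value $\rho$) and $\Sigma_{22} = (\sigma^2-\rho) I_K + \rho\,\iota_K\iota_K'$, where $\iota_K$ denotes the $K$-vector of ones. This $\Sigma_{22}$ is again an equicorrelation matrix, so the argument of the previous paragraph applies verbatim with $K+1$ replaced by $K$: $\iota_K$ is an eigenvector of $\Sigma_{22}$ with eigenvalue $\sigma^2 + (K-1)\rho > 0$, $\Sigma_{22}$ is invertible, and $\iota_K'\Sigma_{22}^{-1} = (\sigma^2 + (K-1)\rho)^{-1}\iota_K'$. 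Hence
\[
\Sigma_{12}\Sigma_{22}^{-1} = \rho\,\iota_K'\Sigma_{22}^{-1} = \frac{\rho}{\sigma^2 + (K-1)\rho}\,\iota_K' = c_2\,\iota_K',
\]
with $c_2 = \rho/(\sigma^2 + (K-1)\rho) > 0$, since $\rho > 0$ and, because $K \geq 1$ and $\sigma^2 > \rho > 0$, the denominator is at least $\sigma^2 > 0$.

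There is no substantive obstacle here — the lemma is routine linear algebra once one spots the equicorrelation structure. The only points requiring minor care are: (i) verifying positive definiteness (equivalently, invertibility) of both $\Sigma$ and $\Sigma_{22}$ from the inequality $\sigma^2 > \rho > 0$, which is exactly what licenses passing from the eigenvector relation for $\Sigma$ to the one for $\Sigma^{-1}$; and (ii) keeping the dimensions straight, since the $\iota$ in the first claim lives in $\reals^{K+1}$ while the $\iota$ in the second lives in $\reals^{K}$. One could alternatively derive both identities by writing out $\Sigma^{-1}$ and $\Sigma_{22}^{-1}$ explicitly via the Sherman--Morrison formula, but the eigenvector route avoids that computation entirely.
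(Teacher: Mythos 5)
Your proof is correct, but it takes a different route from the paper. The paper writes $\Sigma = \Lambda + \rho\,\iota\iota'$ with $\Lambda = (\sigma^2-\rho)I$ and computes $\iota'\Sigma^{-1}$ by applying the Sherman--Morrison formula to get an explicit expression for the full inverse, then collects terms to exhibit the scalar $c_1$; the second claim is then obtained, as in your argument, from $\Sigma_{12} = \rho\,\iota'$ and the fact that $\Sigma_{22}$ inherits the equicorrelation structure. You instead observe that $\iota$ is an eigenvector of $\Sigma$ with eigenvalue $\sigma^2 + K\rho$ and that the orthogonal complement of $\iota$ is an eigenspace with eigenvalue $\sigma^2 - \rho$, so that $\iota'\Sigma^{-1} = (\sigma^2+K\rho)^{-1}\iota'$ follows immediately from symmetry of $\Sigma^{-1}$. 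Your route buys three things: it avoids the rank-one-update computation entirely (and, incidentally, sidesteps a small slip in the paper's displayed Sherman--Morrison expression, where $\rho^2$ appears in place of $\rho$ in the numerator and denominator --- harmless for the sign conclusion but a blemish); it yields clean closed forms $c_1 = 1/(\sigma^2+K\rho)$ and $c_2 = \rho/(\sigma^2+(K-1)\rho)$; and it delivers positive definiteness of $\Sigma$ and $\Sigma_{22}$ as a byproduct, which justifies the inversion that both arguments rely on. The paper's route, in exchange, produces the entire matrix $\Sigma^{-1}$, but nothing beyond the action on $\iota$ is used elsewhere, so nothing is lost by your approach. Your care with dimensions ($\iota \in \reals^{K+1}$ versus $\iota_K \in \reals^{K}$) is also a genuine improvement in precision over the paper's notation, which reuses $\iota$ and $c_1$ across dimensions without comment.
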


\begin{proof}

Note that we can write $\Sigma = \Lambda + \rho \iota \iota'$, where $\Lambda = (\sigma^2 - \rho) I$. It follows from the Sherman-Morrison formula that:

\begin{align*}
 \Sigma^{-1} &= \Lambda^{-1} - \frac{\rho^2 \Lambda^{-1} \iota \iota' \Lambda^{-1} }{1 + \rho^2 \iota' \Lambda^{-1} \iota}    \\
 &= (\sigma^2 - \rho)^{-1} I - \frac{\rho^2 (\sigma^2 - \rho)^{-2} \iota \iota' }{1 + \rho^2 (\sigma^2 - \rho)^{-1} \iota' \iota}.
\end{align*}

\noindent Thus:
\begin{align*}
& \iota' \Sigma^{-1} = \\
&  \iota' \left( (\sigma^2 - \rho)^{-1} I - \frac{\rho^2 (\sigma^2 - \rho)^{-2} \iota \iota' }{1 + \rho^2 (\sigma^2 - \rho)^{-1} \iota' \iota} \right) =\\
& (\sigma^2 - \rho)^{-1} \left(1 - \frac{\rho^2 (\sigma^2 - \rho)^{-1} \iota' \iota }{1 + \rho^2 (\sigma^2 - \rho)^{-1} \iota' \iota} \right) \iota' = \\
&\underbrace{ (\sigma^2 - \rho)^{-1} \left( \frac{1}{1 + \rho^2 (\sigma^2 - \rho)^{-1} \iota' \iota} \right) }_{:=c_1} \, \iota'.   
\end{align*}

\noindent Since $\sigma^2 - \rho >0$, all of the terms in $c_1$ are positive, and thus $c_1>0$, as needed. Additionally, note that if $\Sigma$ satisfies Assumption \ref{Assumption: covariance matrix structure}, then $\Sigma_{22}$ also satisfies Assumption \ref{Assumption: covariance matrix structure} and $\Sigma_{12} = \rho \iota'$. It follows that $\Sigma_{12} \Sigma_{22}^{-1} = \rho c_1 \iota' = c_2 \iota'$ for $c_2 = \rho c_1 > 0$.

\end{proof}

\begin{lem}\label{lem: mean of normal shifted upwards positive}
Suppose $Y \sim N\left(0 , \Sigma \right)$ for $\Sigma$ satisfying Assumption \ref{Assumption: covariance matrix structure}. Let $B = \{ y \in \mathbb{R}^K \, | \, a_j \leq y \leq b_j \text{ for all } j \} $, where $- b_j < a_j < b_j$ for all $j$. Then for $\iota$ the vector of ones,  $\expe{\iota' Y \,|\, Y \in B} = \expe{Y_1 + \ldots + Y_k \,|\, Y \in B}$ is elementwise greater than 0. 
\end{lem}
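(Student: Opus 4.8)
The idea is to interpolate between the symmetric box $B^{(0)} := \{y : -b_j \le y_j \le b_j \text{ for all } j\}$, for which $\expe{\iota'Y \,|\, Y \in B^{(0)}} = 0$ by Lemma \ref{lem: mean of symmetric truncation equals 0}, and the box $B$, by raising the lower endpoint of one coordinate at a time and checking that each such change only increases $\expe{\iota'Y \,|\, \cdot}$. Concretely, for $m = 0,1,\ldots,K$ set $B^{(m)} := \prod_{j\le m}[a_j,b_j]\times\prod_{j>m}[-b_j,b_j]$, so $B^{(0)}$ is the symmetric box (note $a_j > -b_j$ makes this legitimate) and $B^{(K)} = B$. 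The plan is to show $\expe{\iota'Y \,|\, Y \in B^{(m)}} \ge \expe{\iota'Y \,|\, Y \in B^{(m-1)}}$ for every $m$, with strict inequality at $m=1$; chaining these gives $\expe{\iota'Y\,|\,Y\in B} > \expe{\iota'Y\,|\,Y\in B^{(0)}} = 0$. (Every truncation probability appearing below is positive, since $\Sigma$ is full rank under Assumption \ref{Assumption: covariance matrix structure} and each box has nonempty interior because $a_j < b_j$.)

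For the step from $B^{(m-1)}$ to $B^{(m)}$, observe that both boxes impose the identical rectangular constraint $R_m := \prod_{j<m}[a_j,b_j]\times\prod_{j>m}[-b_j,b_j]$ on $Y_{-m}$ and differ only in the constraint on $Y_m$, which is $[-b_m,b_m]$ for $B^{(m-1)}$ and $[a_m,b_m]$ for $B^{(m)}$. Let $g_m(y) := \expe{\iota'Y \,|\, Y_m = y,\, Y_{-m}\in R_m}$. By the tower property, $\expe{\iota'Y\,|\,Y\in B^{(m)}} = \expe{g_m(Y_m)\,|\,Y\in B^{(m)}}$ and $\expe{\iota'Y\,|\,Y\in B^{(m-1)}} = \expe{g_m(Y_m)\,|\,Y\in B^{(m-1)}}$. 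Since $-b_m < a_m$, the conditional law of $Y_m$ given $\{Y\in B^{(m)}\}$ equals the conditional law of $Y_m$ given $\{Y\in B^{(m-1)}\}$ further restricted to $\{Y_m \ge a_m\}$ (the constraint $Y_m\le b_m$ already holding), and therefore stochastically dominates it. Consequently, as soon as $g_m$ is shown to be nondecreasing, $\expe{g_m(Y_m)\,|\,Y\in B^{(m)}} \ge \expe{g_m(Y_m)\,|\,Y\in B^{(m-1)}}$; and at $m=1$ the inequality is strict because $g_1$ will be strictly increasing while $\{Y_1 < a_1\}$ has positive probability within $B^{(0)}$.

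The remaining --- and main --- task is to show $g_m$ is (strictly) increasing. Since $Y \sim N(0,\Sigma)$ with $\Sigma$ as in Assumption \ref{Assumption: covariance matrix structure}, conditional on $Y_m = y$ we have $Y_{-m} \sim N\big(\tfrac{\rho}{\sigma^2} y\,\iota,\ \Sigma'\big)$ with $\Sigma' = \Sigma_{-m,-m} - \Sigma_{-m,m}\Sigma_{mm}^{-1}\Sigma_{m,-m}$; a direct computation gives $\Sigma' = (\sigma^2-\rho)I + \tfrac{\rho(\sigma^2-\rho)}{\sigma^2}\iota\iota'$, so $\Sigma'$ again satisfies Assumption \ref{Assumption: covariance matrix structure}. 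Thus $g_m(y) = y + \expe{\iota'Y_{-m}\,|\,Y_{-m}\in R_m}$ (using $\iota'Y = Y_m + \iota'Y_{-m}$), where the inner expectation is over $Y_{-m}$ with mean parameter $\tfrac{\rho}{\sigma^2}y\,\iota$, so differentiating in $y$ and invoking the first part of Lemma \ref{lem: definiteness of jacobian of truncated expectation},
\begin{align*}
g_m'(y) = 1 + \frac{\rho}{\sigma^2}\,\iota'\,\var{Y_{-m}\,|\,Y_{-m}\in R_m}\,(\Sigma')^{-1}\iota .
\end{align*}
By Lemma \ref{lem: Covariance structure assumption implies iota an eigenvector} applied to $\Sigma'$, $(\Sigma')^{-1}\iota = c_1'\,\iota$ for some $c_1' > 0$, whence $g_m'(y) = 1 + \tfrac{\rho c_1'}{\sigma^2}\var{\iota'Y_{-m}\,|\,Y_{-m}\in R_m} \ge 1 > 0$ (and if $K=1$ there is no $Y_{-m}$ and $g_1(y)=y$ is trivially increasing). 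The only subtlety is the bookkeeping that every conditional covariance appearing here --- in particular each Schur complement $\Sigma'$ --- inherits the equicorrelation form of Assumption \ref{Assumption: covariance matrix structure}, so that $\iota$ remains the relevant eigen-direction and Lemmas \ref{lem: definiteness of jacobian of truncated expectation} and \ref{lem: Covariance structure assumption implies iota an eigenvector} are available; the tower property, the stochastic-dominance comparison, the differentiation of the truncated mean, and the positivity of truncation probabilities are all routine.
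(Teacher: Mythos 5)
Your proof is correct, and it shares the paper's essential technical core: the observation that the conditional law of $Y_{-m}$ given $Y_m=y$ is normal with mean $\tfrac{\rho}{\sigma^2}y\,\iota$ and a Schur-complement covariance that again satisfies Assumption \ref{Assumption: covariance matrix structure}, combined with Lemma \ref{lem: definiteness of jacobian of truncated expectation} and Lemma \ref{lem: Covariance structure assumption implies iota an eigenvector} to conclude that $y\mapsto\expe{\iota'Y_{-m}\,|\,Y_{-m}\in R_m,\,Y_m=y}$ is nondecreasing. Where you differ is in the outer layer. The paper treats the vector of lower endpoints as a continuous parameter, applies the mean value theorem to $g(x)=\expe{\iota'Y\,|\,Y\in B^X(x)}$, and uses a Leibniz-rule computation to reduce $\partial g/\partial x_K>0$ to comparing the box mean with the mean on the face $\{Y_K=x_K\}$, finishing with the law of iterated expectations. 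You instead interpolate discretely, raising one lower endpoint at a time, and replace the Leibniz computation with a stochastic-dominance argument: truncating $Y_m$ from below shifts its conditional law upward, and integrating the monotone function $g_m$ against the shifted law can only increase the mean. The two arguments are morally the same monotonicity-in-the-lower-endpoint idea, but your version avoids differentiating a ratio of boundary integrals and handles strictness more transparently (strict dominance at the first step against a strictly increasing $g_1$), at the mild cost of the bookkeeping that each intermediate box $B^{(m)}$ imposes the same rectangle $R_m$ on $Y_{-m}$ so that $g_m$ is the common regression function under both conditionings. Both proofs rely on positivity of the relevant marginal densities of the truncated normal, which the paper cites explicitly; your parenthetical remarks cover the same points.
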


\begin{proof}

For any $x \in \mathbb{R}^K$ such that $x_j \leq b_j$ for all $j$, define $B^X(x) = \{ y \in \mathbb{R}^K \, | \, x_j \leq y \leq b_j \text{ for all } j \}  $. Let $b = (b_1, \ldots, b_K)$. Note that $B^X(-b)$ is a symmetric rectangular truncation around 0, so from Lemma \ref{lem: mean of symmetric truncation equals 0}, we have that $\expe{Y \,|\, Y \in B^X(-b) } = 0$. Now, define

\begin{align*}
g(x) = \expe{\iota' Y \,|\, Y \in B^X(x)}.   
\end{align*}

\noindent From the argument above, we have that $g(-b) = 0$, and we wish to show that $g(a) > 0$. Note that by the mean-value theorem, for some $t \in (0,1)$,
\begin{align*}
    g(a) &= g(-b) + (a - (-b)) \; \nabla g\left( t a + (1-t) (-b) \right) \\
    &= (a + b) \nabla g\left( t a + (1-t) (-b) \right) \\
    &=: (a+b) \nabla g(x^t).
\end{align*}

By assumption, $(a+b)$ is elementwise greater than 0. It thus suffices to show that all elements of $\nabla g\left( \xt \right)$ are positive. WLOG, we show that  $\dfrac{ \partial g(\xt) }{ \partial x_K } > 0$. 

Using the definition of the conditional expectation and Leibniz's rule, we have 
{\small 
\begin{align}
    &\dfrac{ \partial g(\xt) }{ \partial x_K } = \nonumber \\& \dfrac{ \partial }{ \partial \xt_K } \left[  \left( \int_{\xt_1}^{b_1} \cdots \int_{\xt_K}^{b_K} (y_1 + \ldots + y_K) \; \phi_\Sigma(y) \; dy_1 \ldots dy_K  \right) \left( \int_{\xt_1}^{b_1} \cdots \int_{\xt_K}^{b_K}  \; \phi_\Sigma(y) \; dy_1 \ldots dy_K  \right)^{-1}  \right] = \nonumber \\ 
    &   \left( \int_{\xt_1}^{b_1} \cdots \int_{\xt_K}^{b_K} (y_1 + \ldots + y_K) \; \phi_\Sigma(y) \; dy_1 \ldots dy_K  \times  \int_{\xt_1}^{b_1} \cdots \int_{ \xt_{K-1} }^{b_{K-1}}  \; \phi_\Sigma\left( \twovec{y_{-K}}{\xt_K} \right) \; dy_1 \ldots dy_{K-1}  \right. \nonumber \\
    & \left. -    \int_{\xt_1}^{b_1} \cdots \int_{ \xt_{K-1} }^{b_{K-1}}  (y_1 + \ldots + y_{K-1} + \xt_K) \; \phi_\Sigma\left(\twovec{y_{-K}}{\xt_K}\right) \; dy_1 \ldots dy_{K-1} \;\times \; \int_{\xt_1}^{b_1} \cdots \int_{\xt_K}^{b_K}  \; \phi_\Sigma(y) \; dy_1 \ldots dy_K          \right)  \nonumber\\
    & \times \left( \int_{\xt_1}^{b_1} \cdots \int_{\xt_K}^{b_K}  \; \phi_\Sigma(y) \; dy_1 \ldots dy_K  \right)^{-2} \label{eq:dg dxK}
\end{align}
}

\noindent where $\phi_\Sigma(y)$ denotes the PDF of a multivariate normal with mean 0 and variance $\Sigma$. It follows from (\ref{eq:dg dxK}) that $\dfrac{ \partial g(\xt) }{ \partial x_K } > 0$ if and only if

\begin{align*}
    &\dfrac{\int_{\xt_1}^{b_1} \cdots \int_{\xt_k}^{b_K} (y_1 + \ldots + y_K) \; \phi_\Sigma(y) \; dy_1 \ldots dy_K}{\int_{\xt_1}^{b_1} \cdots \int_{\xt_k}^{b_K}  \; \phi_\Sigma(y) \; dy_1 \ldots dy_K} > \\
    &\dfrac{\int_{\xt_1}^{b_1} \cdots \int_{ \xt_{K-1} }^{b_{K-1}}  (y_1 + \dots + y_{K-1} + \xt_K) \; \phi_\Sigma\left(\twovec{y_{-K}}{\xt_K}\right) \; dy_1 \ldots dy_{K-1}}{\int_{\xt_1}^{b_1} \cdots \int_{ \xt_{K-1} }^{b_{K-1}} \; \phi_\Sigma\left(\twovec{y_{-K}}{\xt_K}\right) \; dy_1 \ldots dy_{K-1}}
\end{align*}

\noindent or equivalently,

\begin{align*}
\expe{Y_1 + \ldots + Y_K \,|\, \xt_j \leq Y_j \leq b_j, \forall j} > \expe{Y_1 + \ldots + Y_K \,|\, \xt_j \leq Y_j \leq b_j, \text{ for } j < K, \, Y_K = \xt_K}.    
\end{align*}

\noindent It is clear that $\expe{Y_K \,|\, \xt_j \leq Y_j \leq b_j, \forall j} > \xt_K$, since $\xt_K < b_K$ and the $K$th marginal density of the rectangularly-truncated normal distribution is positive for all values in $[\xt_K, b_K]$ (see \citet{cartinhour_one-dimensional_1990}). It thus suffices to show that

\begin{align}
\expe{Y_1 + \ldots + Y_{K-1} \,|\, \xt_j \leq Y_j \leq b_j, \forall j} \geq \expe{Y_1 + \ldots + Y_{K-1} \,|\,  \xt_j \leq Y_j \leq b_j, \text{ for } j < K, \, Y_K = \xt_K}. \label{eqn: inequality with truncated expectation of y1 to yk}
\end{align}

To see why (\ref{eqn: inequality with truncated expectation of y1 to yk}) holds, let $\tilde{Y}_{-K} = Y_{-K} - \Sigma_{-K,K} \Sigma_{K,K}^{-1} Y_K$, where a ``$-K$'' subscript denotes all of the indices except for $K$. By an argument analogous to that in the Proof of Lemma \ref{lem:independence of betatilde and betapre} for $\betatildepost$, one can easily verify that $\tilde{Y}_{-K} \sim \normnot{0}{\tilde{\Sigma}}$ for some $\tilde{\Sigma}$, and $\tilde{Y}_{-K}$ is independent of $Y_K$. Additionally, by an argument analogous to that in Proposition \ref{prop: performance under parallel pre-trends}, Part \ref{subprop: betatilde more efficient}, one can show that $\tilde{\Sigma} = \Sigma_{-K,-K} - \Sigma_{-K,K} \Sigma_{K,K}^{-1} \Sigma_{K,-K}$. We now argue that $\tilde{\Sigma}$ satisfies Assumption \ref{Assumption: covariance matrix structure}. Since $\Sigma$ satisfies Assumption \ref{Assumption: covariance matrix structure}, so too does $\Sigma_{-K,-K}$. Additionally, under Assumption \ref{Assumption: covariance matrix structure}, $\Sigma_{-K,K} = \rho \iota$ and $\Sigma_{K,K}^{-1} = \frac{1}{\sigma^2}$, so $\Sigma_{-K,K} \Sigma_{K,K}^{-1} \Sigma_{K,-K}$ equals $\rho^2/ \sigma^2$ times $\iota \iota'$, the matrix of ones. The diagonal terms of $\tilde{\Sigma}$ are thus equal to $\tilde{\sigma}^2 = \sigma^2 - \rho^2/\sigma^2$, and the off-diagonal terms are equal to $\tilde{\rho} = \rho - \rho^2/\sigma^2$, or equivalently $\tilde{\rho} = \rho (1 - \rho/ \sigma^2)$. Since by Assumption \ref{Assumption: covariance matrix structure}, $0 < \rho < \sigma^2$, it is clear that $\tilde{\sigma}^2 > \tilde{\rho}$. Additionally, $0 < \rho < \sigma^2$ implies that $1 - \rho/\sigma^2 >0$, and hence $\tilde{\rho} > 0$, which completes the proof that $\tilde{\Sigma}$ satisfies Assumption \ref{Assumption: covariance matrix structure}. 

Now, by construction, $Y_{-K} = \tilde{Y}_{-K} + \Sigma_{-K,K} \Sigma_{K,K}^{-1} Y_K$. By Lemma \ref{lem: Covariance structure assumption implies iota an eigenvector}, under Assumption \ref{Assumption: covariance matrix structure},  $\Sigma_{-K,K} \Sigma_{K,K}^{-1} = c \iota$, for some $c>0$, so $Y_{-K} = \tilde{Y}_{-K} + c \iota Y_{K}$. From independence of $\tilde{Y}_{-K}$ and $Y_K$, it follows that

\begin{align*}
    Y_{-K} \,|\, Y_K = y_k \sim \normnot{c \, y_k \, \iota}{\tilde{\Sigma}}. 
\end{align*}

Let $h(\mu) = \expe{X | X \in B_{-K}, \, X \sim \normnot{\mu}{\tilde{\Sigma}}}$ for $B_{-K} = \{ \tilde{x} \in \reals^{K-1} | \xt_j \leq \tilde{x}_j \leq b_j, \text{ for } j = 1,\ldots, K-1 \}$. Then $\expe{ \iota' Y_{-K} \,|\, \xt_j \leq Y_j \leq b_j, \text{ for } j<K, \, Y_K = y_k } = \iota' h(c y_k \iota)$. Hence,

\begin{align*}
    \dfrac{\partial}{\partial y_k} \expe{ \iota' Y_{-K} \,|\, \xt_j \leq Y_j \leq b_j, \text{ for } j<K, \, Y_K = y_k } &= \iota' \left(D_\mu h |_{\mu = c y_k \iota} \right) \iota  \cdot c \\
    &= \iota' \var{Y_{-K} \,|\, Y_{-K} \in B_{-K}, Y_K = y_k } \tilde{\Sigma}^{-1} \iota c \\
    &= \iota' \var{Y_{-K} \,|\, Y_{-K} \in B_{-K}, Y_K = y_k } \iota c_1 c \\
    &\geq 0
\end{align*}

\noindent where the second line follows from Lemma \ref{lem: definiteness of jacobian of truncated expectation}; the third line uses Lemma \ref{lem: Covariance structure assumption implies iota an eigenvector} to obtain that $\tilde{\Sigma}^{-1} \iota = \iota c_1$ for $c_1 >0$; and the inequality follows from the fact that $\var{Y_{-K} \,|\, Y_{-K} \in B_{-K}, Y_K = y_k }$ is positive semi-definite and $c_1$ and $c$ are positive by construction. Thus, for all $y_k \in [\xt_k, b_k]$,

\begin{align*}
& \expe{Y_1 + \ldots + Y_{K-1} \,|\, \xt_j \leq Y_j \leq b_j \text{ for } j < K, \, Y_K = y_k } \geq \\
&\expe{Y_1 + \ldots + Y_{K-1} \,|\, \xt_j \leq Y_j \leq b_j \text{ for } j < K, \, Y_K = \xt_k }.    
\end{align*}

By the law of iterated expectations, we have

\begin{align*}
& \expe{Y_1 + \ldots + Y_{K-1} \,|\, \xt_j \leq Y_j \leq b_j, \forall j} = \\
&\expe{ \expe{ Y_1 + \ldots + Y_{K-1} \,|\, \xt_j \leq Y_j \leq b_j \text{ for } j<K, Y_K } \,|\, \xt_j \leq Y_j \leq b_j, \forall j} \geq \\
&\expe{ \expe{ Y_1 + \ldots + Y_{K-1} \,|\, \xt_j \leq Y_j \leq b_j \text{ for } j<K, Y_K = \xt_K } \,|\, \xt_j \leq Y_j \leq b_j, \forall j} =\\
&\expe{ Y_1 + \ldots + Y_{K-1} \,|\, \xt_j \leq Y_j \leq b_j \text{ for } j<K, Y_K = \xt_K }
\end{align*}

\noindent as we wished to show.

\end{proof}

\paragraph{Proof of Proposition \ref{prop: relative magnitudes under lienar pre-trends}}

From Proposition \ref{prop: bias when parallel trends is false - general}, Part \ref{subprop: betahat biased when pretrends false}, the first result is equivalent to showing that $$\Sigma_{12} \Sigma_{22}^{-1} \, \expe{\betahatpre - \betapre \,|\, \betahatpre \in B} > 0.$$ 
By Lemma \ref{lem: Covariance structure assumption implies iota an eigenvector}, Assumption \ref{Assumption: covariance matrix structure} implies that $\Sigma_{12} \Sigma_{22}^{-1} = c_1 \iota'$ for $c_1>0$, so it suffices to show that $\iota' \expe{\betahatpre - \betapre \,|\, \betahatpre \in B} $ is elementwise greater than zero. Note that by assumption $(\betahatpre - \betapre) \sim \normnot{0}{\Sigma_{22}}$. Additionally, $\betahatpre \in  B_{NS} = \{ \hat{\beta}_{pre} : | \hat{\beta}_{pre,j} | / \sqrt{\Sigma_{jj}} \leq c_{\alpha} \text{ for all } j \}$ iff $(\betahatpre - \betapre) \in \tilde{B}_{NS} = \{ \beta :  a_j \leq \beta_j \leq b_j \}$ for $a_j = -c_\alpha \sqrt{\Sigma_{jj}} - \beta_{pre,j}$ and $b_j = c_\alpha \sqrt{\Sigma_{jj}} - \beta_{pre,j}$. Since $\beta_{pre,j} < 0$ for all $j$, we have that $-b_j < a_j < b_j$ for all $j$. The first result then follows immediately from Lemma \ref{lem: mean of normal shifted upwards positive}.

From Proposition \ref{prop: bias when parallel trends is false - general}, Part \ref{subprop: betatilde biased when pretrends false}, the second result is equivalent to $\Sigma_{12}\Sigma_{22}^{-1} \betapre < 0$. But by assumption $\betapre < 0$, and we've shown that $\Sigma_{12}\Sigma_{22}^{-1} = c_1 \iota' > 0$, from which the result follows. $\Box$ 

\paragraph{Proof of Proposition \ref{prop: TN distribution}}
\begin{proof}
Follows immediately from Lemma 5.1 and Theorem 5.2 in \citet{lee_exact_2016}.
\end{proof}

\paragraph{Proof of Proposition \ref{prop: proper coverage of TN estimators}}
\begin{proof}

The result follows immediately from Theorem 5 of \citet{andrews_identification_2017}, provided we verify that the distribution of $\eta' \betahat \,|\, Z, A \betahat \leq b $ is continuous for almost every $Z$. 

Note that by Proposition \ref{prop: TN distribution}, $\eta' \betahat \,|\, Z = z, A \betahat \leq b $ is truncated normal with truncation points $V^-(z)$ and $V^+(z)$, and hence, continuous if $V^-(z) < V^+(z)$. Since conditional on $A \betahat \leq b$ and $Z=z$, $V^-(z) \leq \eta' \betahat \leq V^+(z)$, we have $V^-(z) = V^+(z)$ only if $V^-(z) = \eta' \betahat$.

It thus suffices to show that $\prob{\eta' \betahat = V^-(Z) \,|\, A\betahat\leq b } = 0$. Note though that 

\begin{align*}
    \prob{\eta' \betahat = V^-(Z) } &= \expe{ \prob{\eta' \betahat = V^-(z) \,|\, Z =z} } \\
    &= 0
\end{align*}

\noindent where for any fixed value $z$, $\prob{\eta' \betahat = V^-(z) \,|\, Z =z } = 0$ since $\eta' \betahat$ and $Z$ are independent by construction (see \citet{lee_exact_2016} for details) and the distribution of $\eta' \betahat$ is continuous since $\betahat$ is normally distributed, $\Sigma$ is full rank, and $\eta \neq 0$. It follows that

\begin{align*}
     0 &= \prob{\eta' \betahat = V^-(Z) } \\
     &=  \prob{\eta' \betahat = V^-(Z) \,|\, A\betahat\leq b } \prob{A\betahat\leq b} + \prob{\eta' \betahat = V^-(Z) \,|\, A\betahat\not\leq b } \prob{A\betahat\not\leq b} \\
     &\geq \prob{\eta' \betahat = V^-(Z) \,|\, A\betahat\leq b } \prob{A\betahat\leq b}.
\end{align*}

Since $\prob{A\betahat\leq b} >0$ by assumption, it follows that $\prob{\eta' \betahat = V^-(Z) \,|\, A\betahat\leq b }=0$, as needed. 
\end{proof}

\begin{lem}\label{lem: relationship between gamma and beta}
Consider the two regression specifications:

\begin{align} \label{eq: dynamic regression specification - in lemma}
y_{it} = \alpha_t + \sum_{s \neq 0} \beta_s \times Treatment_i \times 1[t =s]  + \beta_{main} \, Treatment_i + \epsilon_{it}
\end{align}

\noindent and

\begin{align}
y_{it} = \alpha_t + \sum_{s > 0} \gamma_s \times Treatment_i \times 1[t =s]  + \sum_{p=0}^{P} \delta_p \times Treatment_i \times t^p    + \epsilon_{it}    \label{eq: regression specification with polynomial pretrend - in lemma}
\end{align}

\noindent where the latter allows for a $P$th order polynomial trend. Assume $P \leq K$ (so that the pre-trend coefficients are identified). Then for any $m>0$, the population OLS coefficient $\gamma_m$ from specification (\ref{eq: regression specification with polynomial pretrend - in lemma}) can be written as $\beta_m - \eta_{\gamma,m}' \betapre $ for $\beta_m$ and $\betapre$ from regression (\ref{eq: dynamic regression specification - in lemma}). Additionally, let $X_{p} = (0^p, \ldots, -K^p)$ and let $X$ be the $(K+1) \times (P+1)$ matrix with columns $X_0, \ldots X_P$. Let $M_{-1}$ be the selection matrix that selects all but the first column. Then

\begin{align}
\gamma_m = \beta_m - (m^0, \ldots, m^P)' (X'X)^{-1} X' M_{-1} \betapre \label{eq: relation of gamma_s to beta}
\end{align}
provided that the data-generating process puts equal weight on all periods in the sense that a randomly-drawn observation $(i,t)$ is equally likely to come from any period, and that $\expe{Treatment_i \,|\, t(i)} = \expe{Treatment_i}$.
%\footnote{Alternatively, if this sampling assumption does not hold, then equation (\ref{eq: relation of gamma_s to beta}) for the coefficients of the weighted least-squares regression where the observations are given weights such that for any randomly drawn $(i,t)$ 

\end{lem}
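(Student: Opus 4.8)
The plan is to express the population OLS coefficient vectors in (\ref{eq: dynamic regression specification - in lemma}) and (\ref{eq: regression specification with polynomial pretrend - in lemma}) in terms of the conditional cell means $\bar y_{g,t} := \expe{y_{it} \mid Treatment_i = g,\, t(i) = t}$, and then read off the stated relationship. I would start from the standard fact that, since every regressor in both specifications is a deterministic function of $(Treatment_i, t(i))$, the population OLS of $y$ on these regressors coincides with weighted least squares of $\bar y_{g,t}$ on the same regressors, with weights $p_{g,t} := \prob{Treatment_i = g,\, t(i) = t}$. The two hypotheses --- that a random draw is equally likely to come from any period, and that $\expe{Treatment_i \mid t(i)} = \expe{Treatment_i} =: \bar T$ --- say precisely that $p_{g,t} = p_g\,\pi$ is a product, with $\pi$ the common per-period weight; this is what lets the period effects $\alpha_t$ act as (weighted) demeaning operators. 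Throughout write $d_t := \bar y_{T,t} - \bar y_{C,t}$, $d^{\mathrm{pre}} := (d_0, d_{-1}, \ldots, d_{-K})' \in \reals^{K+1}$, and $\iota$ for the vector of ones.

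Specification (\ref{eq: dynamic regression specification - in lemma}) is saturated in the $(g,t)$ cells, so its population fit is exact: reading off $g = 0$ gives $\alpha_t = \bar y_{C,t}$, and reading off $g = 1$ gives $\alpha_t + \beta_{main} + \beta_t\,1[t \neq 0] = \bar y_{T,t}$, whence $\beta_{main} = d_0$ and $\beta_t = d_t - d_0$ for $t \neq 0$. With the convention $\beta_0 = 0$, this is exactly the statement that the $(K+1)$-vector $M_{-1}\betapre$ equals $d^{\mathrm{pre}} - d_0\,\iota$.

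For specification (\ref{eq: regression specification with polynomial pretrend - in lemma}) I would argue as follows. The parameters $\alpha_t$ and, in post periods, $\gamma_s$ enter only their own period's cells, so for any fixed $\delta = (\delta_0,\ldots,\delta_P)$ the profiled-out OLS residual in cell $(g,t)$ equals $\pm(\text{cell weight})\cdot (d_t - \sum_p \delta_p t^p)$ for $t \le 0$ and is $0$ for $t > 0$ (two free parameters $\alpha_s,\gamma_s$ fit the two cells of each post period exactly, and $\expe{Treatment_i\mid t} = \bar T$ together with the product weights make the pre-period control and treatment residuals equal $\mp\bar T$ and $\pm(1-\bar T)$ times $d_t - \sum_p\delta_p t^p$). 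The first-order condition for $\delta$ is orthogonality of this residual to the regressors $Treatment_i\,t^p$; after the common factor $\pi\bar T(1-\bar T)$ cancels (here the equal period weights are used), it collapses to the normal equations $X'X\,\hat\delta = X'd^{\mathrm{pre}}$, so $\hat\delta = (X'X)^{-1}X'd^{\mathrm{pre}}$. Exact fit in post period $m$ then gives $\gamma_m = d_m - (m^0,\ldots,m^P)'\hat\delta = d_m - (m^0,\ldots,m^P)'(X'X)^{-1}X'd^{\mathrm{pre}}$.

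It remains to combine the two. Substituting $d_m = \beta_m + d_0$ and $d^{\mathrm{pre}} = M_{-1}\betapre + d_0\,\iota$ into the last display, the target identity (\ref{eq: relation of gamma_s to beta}) reduces to showing $(m^0,\ldots,m^P)'(X'X)^{-1}X'\iota = 1$; but $\iota$ is the first column of $X$, so $X'\iota$ is the first column of $X'X$, hence $(X'X)^{-1}X'\iota = e_1$ and its inner product with $(m^0,\ldots,m^P)'$ is $m^0 = 1$. This yields $\gamma_m = \beta_m - \eta_{\gamma,m}'\betapre$ with $\eta_{\gamma,m}' = (m^0,\ldots,m^P)'(X'X)^{-1}X'M_{-1}$, and since $P$ was arbitrary the remark about higher-order polynomial trends (append columns to $X$) is immediate. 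I expect the main obstacle to be the third paragraph: carefully justifying the reduction of population OLS to the weighted cell-mean problem, the explicit form of the profiled-out residuals (which is where $\expe{Treatment_i \mid t} = \expe{Treatment_i}$ enters), and the fact that the polynomial coefficients decouple from the post-period dummies and equal the plain (unweighted) least-squares polynomial fit through $d^{\mathrm{pre}}$ (which is where the equal-weight assumption enters). Everything after that is bookkeeping plus the one-line identity.
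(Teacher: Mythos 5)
Your proof is correct, but it takes a genuinely different route from the paper's. The paper never computes cell means: it decomposes $y_{it}$ into three pieces --- the part spanned by the year effects and main effect, the post-period interactions, and the pre-period interactions $C=\sum_{s<0}\beta_s \times Treatment_i \times 1[t=s]$ --- and uses linearity of the population projection in the dependent variable to reduce the problem to projecting $C$ alone onto the regressors of (\ref{eq: regression specification with polynomial pretrend - in lemma}); it then partials out the year dummies via Frisch--Waugh--Lovell and pins down the polynomial coefficients by noting that the same $\tilde{\delta}$ minimizes the expected squared error separately within the treatment and control subsamples. You instead reduce population OLS to weighted least squares on the $(g,t)$ cell means, solve the saturated specification (\ref{eq: dynamic regression specification - in lemma}) in closed form ($\beta_t = d_t - d_0$), profile out $\alpha_t$ and the post-period $\gamma_s$, and show the normal equations for $\delta$ collapse to an unweighted fit of $d^{\mathrm{pre}}$ on $X$; the final cancellation of $d_0$ via $(X'X)^{-1}X'\iota = e_1$ is a step the paper's decomposition sidesteps, since its component $C$ already has the period-$0$ baseline removed. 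Your route is more explicit and makes visible exactly where each hypothesis enters (equal period weights make the $\delta$-fit unweighted; constant $\expe{Treatment_i \mid t}$ makes the factor $\bar{T}(1-\bar{T})$ common across periods so it cancels from the normal equations), at the cost of more bookkeeping; the paper's route buys brevity by exploiting linearity of OLS in the outcome. One small omission: the lemma's first assertion, that $\gamma_m = \beta_m - \eta_{\gamma,m}'\betapre$ for \emph{some} $\eta_{\gamma,m}$, is stated without the weighting hypotheses, and your argument as written covers only the equal-weight case --- though your machinery extends immediately, since with general weights $\hat{\delta}$ is still a linear (weighted-projection) function of $d^{\mathrm{pre}}$ that maps $\iota$ to $e_1$, so $d_0$ still cancels.
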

\begin{proof}

Let $\expestar{Y | Z }$ denote the best linear predictor of $Y$ given $Z$. Note that if $\expestar{Y_1 | Z} = Z \delta_1$ and  $\expestar{Y_2 | Z} = Z \delta_2$, then $\expestar{Y_1 + Y_2 | Z } = Z (\delta_1 + \delta_2)$.\footnote{This is because 
\begin{align*}
\delta &= \expe{X'X}^{-1} \expe{X'Y}\\
&= \expe{X'X}^{-1} \expe{X'(Y_1 + Y_2)} \\
&= \underbrace{\expe{X'X}^{-1} \expe{X'Y_1}}_{= \delta_{1}} + \underbrace{\expe{X'X}^{-1} \expe{X'Y_2}}_{= \delta_{2}}. 
\end{align*}
See also FN 20 in \citet{borusyak_revisiting_2016}.} To prove the Lemma, we will divide $y_{it}$ into three components. We will show that the regression of the first component on the RHS variables of (\ref{eq: regression specification with polynomial pretrend - in lemma}) has coefficient $0$ on $1[t=m] \times Treatment_i$; the second component has coefficient $\beta_m$ on $1[t=m] \times Treatment_i$; and the third component has a coefficient on $1[t=m] \times Treatment_i$ that is a linear combination of $\betapre$, and is equal to $- (m^0,..., m^P)' (X'X)^{-1} X' M_{-1} \betapre$ if the DGP puts equal weight on all periods.

Note that from (\ref{eq: dynamic regression specification - in lemma}), we can write:

\begin{align*} 
y_{it} &= \underbrace{ \alpha_t + \beta_{main} \, Treatment_i + \epsilon_{it}}_{=:A} +\\ 
&\underbrace{\sum_{s > 0} \beta_s \times Treatment_i \times 1[t =s]}_{=:B}  + \\
&\underbrace{ \sum_{s < 0} \beta_s \times Treatment_i \times 1[t =s]}_{=:C}
\end{align*}

\noindent where by construction, $\epsilon_{it}$ is orthogonal to all of the RHS variables of (\ref{eq: dynamic regression specification - in lemma}). Note also that all of the variables on the RHS of (\ref{eq: regression specification with polynomial pretrend - in lemma}) can be written as linear combinations of the RHS variables of (\ref{eq: dynamic regression specification - in lemma}), since all of the variables except for the polynomial terms are directly included in (\ref{eq: dynamic regression specification - in lemma}), and $Treatment_i \times t^p = \sum_s s^p \times Treatment_i \times 1[s=t]$.  Hence, $\epsilon_{it}$ is orthogonal to all the RHS variables of (\ref{eq: regression specification with polynomial pretrend - in lemma}) as well.

It is then clear that a regression of A on the RHS variables of (\ref{eq: regression specification with polynomial pretrend - in lemma}) will load entirely on the year effects and the main effect, and thus will have coefficient 0 on $1[t=m] \times Treatment_i$. Likewise, a regression of $B$ on the RHS variables of (\ref{eq: regression specification with polynomial pretrend - in lemma}) will fit perfectly and have coefficient $\beta_m$ on $1[t=m] \times Treatment_i$.

Now, consider the regression of C on the RHS variables of (\ref{eq: regression specification with polynomial pretrend - in lemma}). Note that $C$ is a linear function of $\betapre$, and therefore all of the regression coefficients from this regression will be a linear combination of $\betapre$, from which the first result follows. For the remainder of the proof, we assume that the DGP puts equal weight on each period as defined in the statement of the Lemma. By the Frisch-Waugh-Lovell theorem, we can first residualize the outcome and all of the remaining covariates against the year dummies without changing the coefficients on the other variables. The regression then becomes:

\begin{align*}
 \sum_{s < 0} \beta_s \times (Treatment_i - \overline{Treatment}) \times 1[t =s] &= 
 \sum_{s > 0} \tilde{\gamma}_s \times (Treatment_i - \overline{Treatment}) \times 1[t =s] +\\
 &\sum_{p =0}^P  \tilde{\delta}_p \times (Treatment_i - \overline{Treatment}) \times t^p  + \tilde{\epsilon}_{it}
\end{align*}

\noindent where $\overline{Treatment}$ denotes the expected treatment probability. Additionally, note that the LHS is always 0 for observations with $t>0$, and that $\tilde{\gamma}_s$ gets positive weight only when $t=s>0$.  It follows that $\tilde{\delta}$ will be pinned down entirely by the observations with $t\leq 0$, and then we will have 

\begin{align}
 \tilde{\gamma}_s = -(s^0,...,s^P)' (\tilde{\delta}_0,...,\tilde{\delta}_P) \label{eq: tildegamma in terms of tilde delta} 
\end{align}
\noindent so that the regression fits perfectly for observations with $t>0$. To determine the values of $\tilde{\delta}$, we can therefore focus on the regression

\begin{align}
 \beta_t (Treatment_i - \overline{Treatment}) =& \sum_{p =0}^P  (Treatment_i - \overline{Treatment}) \times t^p \times \tilde{\delta}_p  + \tilde{\epsilon}_{it} \label{eq: regression of betat on deltas}
\end{align}

\noindent restricted to observations with $t\leq0$, where we define $\beta_0 = 0$.

Consider first the problem of choosing $\tilde{\delta}$ to minimize expected MSE for only the observations with $Treatment_i = 1$. Then the residuals are of the form: $(1-\overline{Treatment})^2 (\beta_t - \sum_p \tilde{\delta}_p \times t^p)^2$. Since by assumption we are equally likely to draw a treatment observation from any period, the expected mean squared error is minimized by setting $\tilde{\delta}$ equal to the coefficient from a regression of $\twovec{0}{\betapre}$ on $X$, i.e. $\tilde{\delta} = (X'X)^{-1} X' \twovec{0}{\betapre}$, or equivalently, $\tilde{\delta} = (X'X)^{-1} X' M_{-1} \betapre$. Likewise, consider minimizing expected MSE for only the observations with $Treatment_i = 0$. Then the residuals are of the form: $(-\overline{Treatment})^2 (\beta_t - \sum_p \tilde{\delta}_p \times t^p)^2$. Again, since a control observation is equally likely to come from any period, expected MSE is minimized in this group for $\tilde{\delta} = (X'X)^{-1} X' M_{-1} \betapre$. Since this value of $\tilde{\delta}$ minimizes expected MSE for both subgroups, it must minimize overall MSE. Combining this expression for $\tilde{\delta}$ with (\ref{eq: tildegamma in terms of tilde delta}) gives the desired result. 

\end{proof}

\end{document}